\newcommand{\ud}{\mathrm{d}}
\newcommand{\cH}{\mathcal{H}}
\newcommand{\N}{\mathbb{N}}
\newcommand{\R}{\mathbb{R}}
\newcommand{\C}{\mathbb{C}}
\newcommand{\la}{\langle}
\newcommand{\ra}{\rangle}
\newcommand{\setl}[1]{\left\{#1\right\}}
\newcommand{\norm}[1]{\|#1\|}
\renewcommand{\Im}{\mathrm{Im}}
\renewcommand{\Re}{\mathrm{Re}}
\theoremstyle{plain}
\newtheorem{theorem}{Theorem}[section]
\newtheorem{lemma}[theorem]{Lemma}
\newtheorem{prop}[theorem]{Proposition}
\theoremstyle{definition}
\newtheorem*{remark*}{Remark}
\numberwithin{equation}{section}
\begin{document}

\allowdisplaybreaks

\title[Mean-field dynamics for condensates via Fock space methods]{Mean-field dynamics for mixture condensates via Fock space methods}
\author[G.~de Oliveira]{Gustavo de Oliveira}
\address[G.~de Oliveira]{Departamento de Matem\'atica\\ICEx\\Universidade Federal de Minas Gerais\\Av. Ant\^onio Carlos, 6627\\31270-901 Belo Horizonte--MG (Brazil)}
\email{goliveira5@ufmg.br}
\author[A.~Michelangeli]{Alessandro Michelangeli}
\address[A.~Michelangeli]{International School for Advanced Studies -- SISSA\\via Bonomea 265\\34136 Trieste (Italy)}
\email{alemiche@sissa.it}

\begin{abstract}
  We consider a mean-field model to describe the dynamics of $N_1$ bosons of species one and $N_2$ bosons of species two in the limit as $N_1$ and $N_2$ go to infinity.
  We embed this model into Fock space and use it to describe the time evolution of coherent states which represent two-component condensates.
  Following this approach, we obtain a microscopic quantum description for the dynamics of such systems, determined by the Schr\"{o}dinger equation.
  Associated to the solution to the Schr\"{o}dinger equation, we have a reduced density operator for one particle in the first component of the condensate and one particle in the second component.
  In this paper, we estimate the difference between this operator and the projection onto the tensor product of two functions that are solutions of a system of equations of Hartree type.
  Our results show that this difference goes to zero as $N_1$ and $N_2$ go to infinity.
\end{abstract}

\date{\today}

\thanks{Partially supported by the 2014-2017 MIUR-FIR grant \emph{``Cond-Math: Condensed Matter and Mathematical Physics''} code RBFR13WAET, and by a 2015 visiting research fellowship at the International Center for Mathematical Research (CIRM), Trento.}

\maketitle

\section{Introduction}

One of the typical experimental realisations of Bose-Einstein condensation (also called BEC henceforth) concerns the preparation and the dynamical evolution of the so-called mixture condensates \cite{Hall2008_multicompBEC_experiments,Malomed2008_multicompBECtheory}. They consist of a Bose gas formed by two different species of bosons, with both intra-species and inter-species interactions, which exhibits condensation in each component. This results in a macroscopic occupation of some one-body orbital for the first type of bosons and another one for the second type. Such systems are customarily prepared as a gas of atoms of the same element, typically $^{87}\mathrm{Rb}$, which occupy two hyperfine states \cite{MBGCW-1997,Matthews_HJEWC_DMStringari_PRL1998,HMEWC-1998,Hall-Matthews-Wieman-Cornell_PRL81-1543}, or also as heteronuclear mixtures such as $^{41}\mathrm{K}$-$^{87}\mathrm{Rb}$ \cite{Modugno-Ferrari-Inguscio-etal-Science2001_multicompBEC}, $^{41}\mathrm{K}$-$^{85}\mathrm{Rb}$ \cite{Modugno-PRL-2002}, $^{39}\mathrm{K}$-$^{85}\mathrm{Rb}$ \cite{MTCBM-PRL2004_BEC_heteronuclear} and $^{85}\mathrm{K}$-$^{87}\mathrm{Rb}$ \cite{Papp-Wieman_PRL2006_heteronuclear_RbRb}. In the former case, we refer to the part of the experiment in which no interconversion between particles of different hyperfine states occur. For a comprehensive review of the physical properties of mixture condensates, we refer to \cite[Chapter 21]{pita-stringa-2016}.

A mixture condensate (with a fixed number of particles) is naturally modeled as a many-body system of $N_1$ indistinguishable bosons of the first species and $N_2$ indistinguishable bosons of the second species.
The Hilbert space for the system is the tensor product 
\begin{equation}\label{eq:Hspace_for_mixture}
 \cH = L^2_s(\R^{3N_1}) \otimes L^2_s(\R^{3N_2}),
\end{equation}
where $L^2_s(\R^{3N})$ denotes the subspace of functions of $L^2(\R^{3N})$ that are symmetric with respect to permutation of any pair of variables (which correspond to particles).
The mean-field Hamiltonian of the system has the form
\[
  H_{N_1,N_2} = h_{N_1} \otimes I + I \otimes h_{N_2} + \mathcal{V}_{N_1,N_2},
\]
where, for $p = 1, 2$,
\[
  h_{N_p} = \sum_{j=1}^{N_p} -\Delta_{x_j} + \frac{1}{N_p} \sum_{1 \le j < k \le N_p} V_p(x_j-x_k)
\]
and
\[
  \mathcal{V}_{N_1,N_2} = \frac{1}{N_1+N_2} \sum_{j=1}^{N_1} \sum_{k=1}^{N_2} V_{12}(x_j-y_k).
\]
Here, $x_j \in \R^3$ represents the $j$th variable corresponding to the first factor of $\mathcal{H}$ and $y_k \in \R^3$ represents the $k$th variable corresponding to the second factor.
We will describe later the hypothesis on the interaction potentials $V_1$, $V_2$ and $V_{12}$.

We will consider a mixture condensate with variable number of particles of each species, where particles can not switch species.
In order to do this, we will embed the model mentioned above into a tensor product of Fock spaces, as described in Section \ref{s:fock}.

The $N_1,N_2$-dependent factors in the Hamiltonian $H_{N_1,N_2}$ are typical of the mean-field regime.
When $N_1 \to \infty$ and $N_2\to\infty$ with $N_1/N_2 \to \text{constant}$, the factors guarantee that the kinetic and potential terms of the Hamiltonian remain comparable and thus the many-body dynamics remains non-trivial in the limit.
In fact, there are $N_1+N_2$ kinetic terms and $\frac{1}{2}(N_1+N_2)(N_1+N_2-1)$ potential terms in $H_{N_1,N_2}$, but the mean-field factors reduce the order of the potential energy to
\[
\frac{1}{N_1}\cdot\frac{N_1(N_1-1)}{2}+\frac{1}{N_2}\cdot\frac{N_2(N_2-1)}{2}+\frac{1}{N_1+N_2}\cdot N_1\, N_2 =O(N_1+N_2).
\]
One could obtain a similar regime, for example, by choosing the factor $1/(N_1+N_2)$ for all potential terms, or by replacing the factor in $\mathcal{V}_{N_1,N_2}$ by $1/\sqrt{N_1N_2}$.
Out of such similar choices, the one that we make here is the physically meaningful one, for it yields the physically correct weights in the Hartree equations (as suggested by experiments).
A justification for our choice is provided in \cite[Section~4]{M-Olg-2016_2mixtureMF}.

A natural example of state which models a two-component condensate is a state $\psi_{N_1,N_2} \in \mathcal{H}$ of the form
\[
  \psi_{N_1,N_2}(x_1,\dots,x_{N_1},y_1,\dots,y_{N_2}) = \prod_{j=1}^{N_1} u(x_j) \otimes \prod_{k=1}^{N_2} v(y_k)
\]
for some $u, v \in H^1(\R^3)$ with $\| u \|_{L^2} = 1$ and $\|v\|_{L^2} = 1$.
If $\psi_{N_1,N_2}$ represents a condensate at $t = 0$, its time-evolution $\psi_{N_1,N_2,t}$ is determined by the Schr\"{o}dinger equation
\begin{equation}
  \label{schroedinger}
  i \partial_t \psi_{N_1,N_2,t} = H_{N_1,N_2} \psi_{N_1,N_2,t} \quad \text{with} \quad \psi_{N_1,N_2,0} = \psi_{N_1,N_2}.
\end{equation}
In order to consider systems with variable number of particles, we will embed states of the form $\psi_{N_1,N_2}$ into a product of Fock spaces by considering a coherent superposition of such states.
More precisely, we will consider a tensor product of coherent states as initial state (more on this in Section \ref{s:fock}).

The factorization of the initial state is not preserved by the time-evolution.
However, in the mean-field regime, similarly to what happens for one-component condensates \cite{RS-2007}, we expect that the solution $\psi_{N_1,N_2,t}$ is approximately factorized (in a sense describe below) when $N_1 \to \infty$ and $N_2 \to \infty$.
Schematically, we expect that
\[
  \psi_{N_1,N_2,t}(x_1,\dots,x_{N_1},y_1,\dots,y_{N_2}) \simeq \prod_{j=1}^{N_1} u_t(x_j) \otimes \prod_{k=1}^{N_2} v_t(y_k)
\]
where $u_t$ and $v_t$ are the solutions of the system of equations of Hartree type
\begin{equation}
  \label{sysHartree}
  \begin{aligned}
    i \partial_t u_t & = -\Delta u_t + (V_1 * |u_t|^2) u_t + c_2 (V_{12} * |v_t|^2) u_t \\
    i \partial_t v_t & = -\Delta v_t + (V_2 * |v_t|^2) v_t + c_1 (V_{12} * |u_t|^2) v_t
  \end{aligned}
\end{equation}
with $u_0 = u$ e $v_0 = v$.
We will specify later the constants $c_1$ and $c_2$.

We now explain in which sense the approximate factorization holds true.
When the mixture of bosons is in the state $\psi_{N_1,N_2,t}$, condensation in each component can be inferred by using the density operator
\begin{equation}\label{eq:def_double_partial_trace-ABSTRACT}
  \gamma_{{N_1,N_2,t}}^{(1,1)} = \mathrm{Tr}_{N_1-1, N_2-1} \, |\psi_{N_1,N_2,t}\rangle\langle \psi_{N_1,N_2,t}|,
\end{equation}
where $|\psi_{N_1,N_2,t}\rangle\langle \psi_{N_1,N_2,t}|$ denotes the orthogonal projection onto the state $\psi_{N_1,N_2,t}$ and $\mathrm{Tr}_{N_1-1,N_2-1}$ denotes the trace over $N_1-1$ variables corresponding to the first factor of $\mathcal{H}$ and $N_2-1$ variables corresponding to the second factor.
The operator $\gamma_{{N_1,N_2,t}}^{(1,1)}$ is a non-negative trace class operator on $L^2(\mathbb{R}^3)\otimes L^2(\mathbb{R}^3)$ with integral kernel
\begin{equation}\label{eq:def_double_partial_trace-KERNEL}
\begin{split}
  \gamma_{{N_1,N_2,t}}^{(1,1)}(x,x';y,y')\;=\;&\int_{\mathbb{R}^{3(N_1-1)}} \int_{\mathbb{R}^{3(N_2-1)}} \ud x_2\cdots\ud x_{N_1}\ud y_2\cdots\ud y_{N_2} \\
& \qquad\times \psi_{N_1,N_2,t}(x,x_2,\dots,x_{N_1};y,y_2,\dots,y_{N_2}) \\
& \qquad\times \overline{\psi_{N_1,N_2,t}}(x',x_2,\dots,x_{N_1};y',y_2,\dots,y_{N_2})\,.
\end{split}
\end{equation}
If $\| \psi_{N_1,N_2,t} \| = 1$, we have $\gamma_{{N_1,N_2,t}}^{(1,1)} = 1$.

For $t > 0$, the operator $\gamma_{N_1,N_2,t}^{(1,1)}$ is not rank-one and is not factorised as a tensor product of two density operators. The special case of (complete) two-component BEC with condensate functions $u_t$ and $v_t$ corresponds to the situation when
\begin{equation}\label{eq:def_100BEC-2component}
\lim_{\substack{N_1\to\infty \\ N_2\to\infty}}\gamma_{N_1,N_2,t}^{(1,1)}\;=\;|u_t\otimes v_t\rangle\langle u_t\otimes v_t|,
\end{equation}
where convergence occurs with respect to the trace norm.
The limit expresses the fact that the actual many-body state has the same occupation numbers of the pure tensor product $u_t^{\otimes N_1}\otimes v_t^{\otimes N_2}$. In fact, $\gamma_{N_1,N_2,t}^{(1,1)}$ has non-negative real eigenvalues that sum up to $1$ and that are naturally interpreted as the fraction of the particles occupying the corresponding eigenstates. Thus, when \eqref{eq:def_100BEC-2component} occurs, it means that there is macroscopic occupation of bosons of the first species in the one-body state $u_t$ and a macroscopic occupation of bosons of the second species in the one-body state $v_t$.
The condition \eqref{eq:def_100BEC-2component} is the precise meaning of the approximate factorization mentioned above.

In the limit, the vanishing of $\gamma_{N_1,N_2,t}^{(1,1)}-|u_t\otimes v_t\rangle\langle u_t\otimes v_t|$ (with respect to the trace norm) is much weaker than the vanishing of $\|\psi_{N_1,N_2,t}-u_t^{\otimes N_1}\otimes v_t^{\otimes N_2}\|_{\mathcal{H}}$. In fact, unless the system is non-interacting, even in the regime of condensation there is an additional inter-particle correlation structure in the typical $\psi_{N_1,N_2,t}$ that can not be described using only $\gamma_{N_1,N_2,t}^{(1,1)}$ (this subject, for one-component condensates, has been investigated in \cite{Lewin-Nam-Serfaty-Solovej-2012_Bogolubov_Spectrum_Interacting_Bose,Lewin-Nam-Schlein-2013_Fluctuations_around_Hartree,Nam-Seiringer-2014_Collective_Excit_Bose}).

In this work, we are interested in the time-evolution of a two-component mixture condensate once the gas is prepared in a state of complete condensation in each component. As in experiments, after the initial state is prepared, we imagine that the system evolves under the sole effect of the intra- and inter-species interactions. Under suitable conditions on the density and the interactions of the system, the two-component condensation persists at later times and the system is condensed onto two one-body orbitals that are solutions of the system of equations \eqref{sysHartree}.

The study of the effective dynamics of Bose-Einstein condensates in various regimes is a topic of current interest \cite{M-Olg-2016_2mixtureMF,AO-GPmixture-2016volume,Anap-Hott-Hundertmark-2017,MO-pseudospinors-2017}.
In \cite{M-Olg-2016_2mixtureMF}, a paper by one of us and A.~Olgiati, effective evolution equations were derived rigorously from the many-body Schr\"{o}dinger equation by using a method of ``counting'' the number of particles in the many-body state that occupy the one-body orbitals, at time $t$. This was possible by adapting Pickl's counting method which was used in the study of one-component condensates \cite{kp-2009-cmp2010,Pickl-JSP-2010,Pickl-LMP-2011,Pickl-RMP-2015}.
In \cite{Anap-Hott-Hundertmark-2017}, a similar result was obtained by Anapolitanos, Hott and Hundertmark for systems of particles interacting through the Yukawa potential.

Our paper is related to the analysis done in \cite{M-Olg-2016_2mixtureMF} (see also \cite{MO-pseudospinors-2017}) and \cite{Anap-Hott-Hundertmark-2017}---While our results are similar to the ones in these articles, we formulate our model in a different setting, and we use different methods.
We derive a system of effective equations of Hartree type by using methods in Fock space developed by Hepp \cite{hepp-1974}, Ginibre and Velo \cite{Ginibre-Velo}, Rodnianski and Schlein \cite{RS-2007} and others (see also \cite{Benedikter-DeOliveira-Schlein_QuantitativeGP-2012_CPAM2015}).
A review of these methods for one-component Bose gases can be found in \cite{Benedikter-Porta-Schlein-2015}.
Our results hold for coherent states as initial data (Theorem \ref{t:coherent}) and they allow the Coulomb interaction.
In short, the methods in Fock space are based on the idea of controlling fluctuations of the dynamics with respect to the leading (mean-field) dynamics.
They have been used to study dynamical properties of one-component condensates \cite{chen-lee-2010-JMP2011,Chen-Lee-Schlein-2011,Benedikter-DeOliveira-Schlein_QuantitativeGP-2012_CPAM2015,Lewin-Nam-Schlein-2013_Fluctuations_around_Hartree,Boccato-Cenatiempo-Schlein-2015_AHP2017_fluctuations,Brennecke-Schlein-GP2017}, and also to study fermionic systems \cite{BPS, PP}.
As another example of application of Fock space methods, we mention the recent work of Chen and Soffer \cite{ChenSoffer}.

In order to state the main result of this paper (Theorem \ref{t:coherent}), we need to describe our model in Fock space.
This is done in Section \ref{s:fock}.
The proof of Theorem \ref{t:coherent} appears in Section \ref{s:proofofcoherent}.
In Sections \ref{s:proofofp:part1} and \ref{s:proofofp:part2}, we prove some propositions that we used to prove Theorem \ref{t:coherent}.

\section{The model in Fock space and the main result}
\label{s:fock}

We want to represent our model for a two-component condensate using a tensor product of Fock spaces.
To do this, we will proceed as follows.
First, we present the basic definitions regarding the Fock space and some operators defined on it.
Then, we describe the tensor product of Fock spaces and the tensor products of some operators.
This gives a mathematical model for the two-component condensate.

\textbf{Spaces and basic operators.}
The Fock space for bosons over $L^2(\R^3)$, denoted by $\mathcal{F}$, is the direct sum
\[
  \mathcal{F} = \bigoplus_{n \ge 0} L^2_s(\R^{3n}).
\]
Here, we are using the convention that $L^2_s(\R^0) = \C$.
Given $\psi \in \mathcal{F}$, we have $\psi = \psi^{(0)} \oplus \psi^{(1)} \oplus \psi^{(2)} \oplus \cdots$ with $\psi^{(n)} \in L^2_s(\R^{3n})$ for $n \ge 0$.
For $\psi, \phi \in \mathcal{F}$, the inner product is given by
\[
  \langle \psi, \phi \rangle = \sum_{n \ge 0} \langle \psi^{(n)}, \phi^{(n)} \rangle.
\]
The vector space $\mathcal{F}$ with this inner product is a Hilbert space.
The induced norm on this space is denoted by $\| \cdot \|$.

We now mention some special states in $\mathcal{F}$ which are relevant in our work.
First, the vacuum state $1 \oplus 0 \oplus 0 \oplus \cdots$, denoted by $\Omega$.
Secondly, states with exactly $n$ particles, denoted by $\psi_{\underline{n}}$.
Thus $\psi_{\underline{n}} = 0 \oplus 0 \oplus \cdots \oplus \psi_{\underline{n}}^{(n)} \oplus 0 \oplus \cdots$.

An operator that plays an important role in our work is the number of particles operator, denoted by $\mathcal{N}$.
This is the self-adjoint operator on $\mathcal{F}$ defined by
\[
  \mathcal{N} \psi = 0 \cdot \psi^{(0)} \oplus 1 \cdot \psi^{(1)} \oplus 2 \cdot \psi^{(2)} \oplus \cdots \oplus n \cdot \psi^{(n)} \oplus \cdots
\]
for any $\psi \in \mathcal{F}$ such that $\sum_{n \ge 0} n^2 \| \psi^{(n)} \|^2 < \infty$.
We observe that a state $\psi_{\underline{n}}$ with exactly $n$ particles is an eigenvector of $\mathcal{N}$ with eigenvalue $n$.

Our next step is defining creation and annihilation operators on $\mathcal{F}$.
For $f$ in $L^2(\R^3)$, the creation operator $a^*(f)$ is defined (as the closure of)
\[
  (a^*(f) \psi)^{(n)}(x_1, \dots, x_n) = \frac{1}{\sqrt{n}} \sum_{j=1}^n f(x_j) \psi^{(n-1)}(x_1, \dots, x_{j-1}, x_{j+1}, \dots, x_n)
\]
for $n = 1, 2, \dots$ and $(a^*(f) \psi)^{(0)} = 0$.
Here $\psi = \psi^{(0)} \oplus \psi^{(1)} \oplus \psi^{(2)} \oplus \cdots$ and $a(f) \psi = (a(f) \psi)^{(0)} \oplus (a(f) \psi)^{(1)} \oplus \cdots$.
Similarly, the annihilation operator is defined (as the closure of)
\[
  (a(f) \psi)^{(n)}(x_1, \dots, x_n) = \sqrt{n+1} \int dx \, \overline{f(x)} \psi^{(n+1)}(x,x_1, \dots, x_n)
\]
for $n = 0, 1, \dots$ and $a(f) \Omega = 0$.
The operator $a^*(f)$ is the adjoint of $a(f)$.
Both operators are unbounded, densely defined and closed.
For $f$ and $g$ in $L^2(\R^3)$, they obey the canonical commutation relations
\[
  [a(f), a^*(g)] = \langle f, g \rangle \qquad \text{and} \qquad [a(f), a(g)] = [a^*(f), a^*(g)] = 0.
\]
Related to $a^*(f)$ and $a(f)$, it is useful to consider the self-adjoint operator
\[
  \phi(f) = a^*(f) + a(f).
\]

Creation and annihilation operators can be represented using the operator valued distributions $a^*_x$ and $a_x$ for $x \in \R^3$.
These are the distributions such that
\[
  a^*(f) = \int dx \, f(x) a_x^* \qquad \text{and} \qquad a(f) = \int dx \, \overline{f(x)} a_x.
\]
Consequently, for $x, y \in \R^3$, we have
\[
  [a_x, a_y^*] = \delta(x-y) \qquad \text{and} \qquad [a_x, a_y] = [a_x^*, a_y^*] = 0.
\]

Using the operator valued distributions, we can represent the number of particles operator as
\[
  \mathcal{N} = \int dx \, a_x^* a_x.
\]

We now consider two copies of the bosonic Fock space $\mathcal{F}$.
In our model, the state space for the two-component condensate is $\mathcal{F} \otimes \mathcal{F}$.
Here, we are using the standard tensor product of two Hilbert spaces (as described in \cite{ReedSim1}, for example).

Having described the state space, let us define operators on it.
We will often use the standard construction of tensor product of operators, as described in \cite{ReedSim1}, for example.

For $f \in L^2(\R^3)$, we define creation and annihilation operators on each factor of $\mathcal{F} \otimes \mathcal{F}$ by
\begin{alignat*}{3}
  b^*(f) & = a^*(f) \otimes I, & \qquad \qquad c^*(f) & = I \otimes a^*(f), \\
  b(f) & = a(f) \otimes I, & \qquad \qquad c(f) & = I \otimes a(f).
\end{alignat*}
The corresponding operator valued distributions are given by
\[
  b_x^* = a_x^* \otimes I, \qquad c_x^* = I \otimes a_x^*, \qquad b_x = a_x \otimes I, \qquad c_x = I \otimes a_x.
\]
There are several commutation relations for the operators $b^*(f)$, $c^*(f)$, $b(f)$ and $c(f)$.
The only commutators that are not equal to zero are
\[
  [b(f), b^*(g)] = \langle f, g \rangle \qquad \text{and} \qquad [c(f), c^*(g)] = \langle f, g \rangle.
\]
Consequently
\[
  [b_x, b_y^*] = \delta(x-y) \qquad \text{and} \qquad [c_x, c_y^*] = \delta(x-y).
\]

Using the operator-valued distributions, we can represent $\mathcal{N} \otimes I$ and $I \otimes \mathcal{N}$ as
\[
  \mathcal{N} \otimes I = \int dx \, b_x^* b_x \qquad \text{and} \qquad I \otimes \mathcal{N} = \int dy \, c_y^* c_y.
\]

The creation and annihilation operators are bounded with respect to $I \otimes \mathcal{N}^{1/2}$, $\mathcal{N}^{1/2} \otimes I$, etc.
The precise statement is given in the following lemma.
The proof of the lemma follows from the corresponding well-known result for creation and annihilation operators on each factor of $\mathcal{F} \otimes \mathcal{F}$ (see \cite{RS-2007} for a proof).

\begin{lemma}
  \label{l:relbN}
  Let $f, g \in L^2(\R^3)$.
  For any $\psi \in \mathcal{F} \otimes \mathcal{F}$, we have
  \begin{alignat*}{3}
    \| b^*(f) \psi \| & \le \| f \| \, \| \big( (\mathcal{N} + 1)^{1/2} \otimes I \big) \psi \|, & \qquad \| c^*(g) \psi \| & \le \| g \| \, \| \big( I \otimes (\mathcal{N} + 1)^{1/2} \big) \psi \|, \\
    \| b(f) \psi \| & \le \| f \| \, \| \big( \mathcal{N}^{1/2} \otimes I \big) \psi \|, & \qquad \| c(g) \psi \| & \le \| g \| \, \| \big( I \otimes \mathcal{N}^{1/2} \big) \psi \|.
  \end{alignat*}
\end{lemma}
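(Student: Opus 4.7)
The plan is to reduce the four inequalities to their one-factor analogues on $\mathcal{F}$, namely
\begin{equation*}
  \|a^*(f)\phi\| \le \|f\|\,\|(\mathcal{N}+1)^{1/2}\phi\|, \qquad \|a(f)\phi\| \le \|f\|\,\|\mathcal{N}^{1/2}\phi\|,
\end{equation*}
valid for every $\phi$ in the natural domain. These single-factor bounds are the ones invoked in \cite{RS-2007} and can be proven sector-by-sector: for $a(f)$, applying Cauchy--Schwarz to the defining integral
\begin{equation*}
  (a(f)\phi)^{(n)}(x_1,\dots,x_n) = \sqrt{n+1}\int \overline{f(x)}\,\phi^{(n+1)}(x,x_1,\dots,x_n)\,\ud x
\end{equation*}
and summing over $n$ gives $\|a(f)\phi\|^2 \le \|f\|^2 \sum_n (n+1)\|\phi^{(n+1)}\|^2 = \|f\|^2 \|\mathcal{N}^{1/2}\phi\|^2$; the bound for $a^*(f)$ then follows either by a direct sector-wise calculation or from the canonical commutation relation $a(f)a^*(f) = a^*(f)a(f) + \|f\|^2$ combined with the bound just established.

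To transfer the bounds to $\mathcal{F}\otimes\mathcal{F}$, I would fix an orthonormal basis $\{e_k\}_{k\in\N}$ of $\mathcal{F}$ and expand a given $\psi \in \mathcal{F}\otimes\mathcal{F}$ as $\psi = \sum_k \phi_k \otimes e_k$ with $\phi_k \in \mathcal{F}$ and $\sum_k \|\phi_k\|^2 = \|\psi\|^2$. Since $b^{\#}(f) = a^{\#}(f)\otimes I$ acts only on the first tensor factor, one gets
\begin{equation*}
  \|b(f)\psi\|^2 = \sum_k \|a(f)\phi_k\|^2 \le \|f\|^2 \sum_k \|\mathcal{N}^{1/2}\phi_k\|^2 = \|f\|^2\,\|(\mathcal{N}^{1/2}\otimes I)\psi\|^2,
\end{equation*}
and analogously for $b^*(f)$ using the $(\mathcal{N}+1)^{1/2}$ bound on each $\phi_k$. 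Choosing instead an orthonormal basis of the first factor and expanding $\psi$ accordingly yields, by the same argument, the estimates for $c(g)$ and $c^*(g)$.

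No genuine obstacle is present: the argument is a bookkeeping transfer of the standard one-factor estimate, plus an orthonormal expansion in the unused tensor factor. The only minor point requiring care is to first verify the inequalities on the algebraic tensor product (vectors with finitely many non-zero $\phi_k$, each with finitely many non-zero Fock sectors), where everything is literally a finite sum of norms, and then extend by closure and density to the natural domain of $((\mathcal{N}+1)^{1/2}\otimes I)$ or $(\mathcal{N}^{1/2}\otimes I)$; this is precisely the domain on which the right-hand sides of the claimed inequalities are finite, so the lemma as stated follows on the whole set where it is meaningful.
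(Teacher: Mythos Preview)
Your proof is correct and follows exactly the approach the paper indicates: the paper does not give a detailed proof but simply states that the lemma ``follows from the corresponding well-known result for creation and annihilation operators on each factor of $\mathcal{F} \otimes \mathcal{F}$ (see \cite{RS-2007} for a proof).'' Your orthonormal-basis expansion in the unused tensor factor is precisely the standard way to make this reduction explicit, so you have supplied the details the paper omits.
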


\textbf{The Hamiltonian.}
We now define the Hamiltonian $\mathcal{H}_{N_1,N_2}$ acting on $\mathcal{F} \otimes \mathcal{F}$.
First, we observe that the set of all finite sums $\sum_k \psi_{\underline{k}}$ is dense in~$\mathcal{F}$.
Moreover, the set of all finite linear combinations of products $\psi_{\underline{n_1}} \otimes \phi_{\underline{n_2}}$, denoted by $\mathcal{D}$, is dense in $\mathcal{F} \otimes \mathcal{F}$.
We define the Hamiltonian $\mathcal{H}_{N_1,N_2}$ acting on products $\psi_{\underline{n_1}} \otimes \phi_{\underline{n_2}}$ by $\mathcal{H}_{N_1,N_2} \, \psi_{\underline{n_1}} \otimes \phi_{\underline{n_2}} = \mathcal{H}_{N_1,N_2}^{(n_1,n_2)} \, \psi_{\underline{n_1}} \otimes \phi_{\underline{n_2}}$ where
\[
  \mathcal{H}_{N_1,N_2}^{(n_1,n_2)} = h_{N_1}^{(n_1)} \otimes I + I \otimes h_{N_2}^{(n_2)} + T_{N_1,N_2}^{(n_1,n_2)}
\]
with
\[
  h_{N_p}^{(n_p)} = \sum_{j=1}^{n_p} -\Delta_{x_j} + \frac{1}{N_p} \sum_{1 \le j < k \le n_p} V_p(x_j-x_k)
\]
and
\[
  T_{N_1,N_2}^{(n_1,n_2)} = \frac{1}{N_1+N_2} \sum_{j=1}^{n_1} \sum_{k=1}^{n_2} V_{12}(x_j-y_k).
\]
We then extend $\mathcal{H}_{N_1,N_2}$ to $\mathcal{D}$ by linearity.
With the hypothesis on $V_1$, $V_2$ and $V_{12}$ in Theorem \ref{t:coherent}, the Hamiltonian $\mathcal{H}_{N_1,N_2}$ on $\mathcal{D}$ gives rise to a self-adjoint operator (which we denote by the same symbol $\mathcal{H}_{N_1,N_2}$) \cite[Theorem X.23]{ReedSim2}.
In particular, the initial value problem $i \partial_t \psi_t = \mathcal{H}_{N_1,N_2} \psi_t$ with $\psi_t|_{t=0} = \psi_0$ is well-posed.

Using the operator-valued distributions $b_x$ and $c_x$, the Hamiltonian $\mathcal{H}_{N_1,N_2}$ can be written as
\[
  \mathcal{H}_{N_1,N_2} = \mathcal{H}_{N_1} + \mathcal{H}_{N_2} + \mathcal{T}_{N_1,N_2}
\]
where
\[
  \mathcal{H}_{N_1} = \int dx \, \nabla_x b_x^* \nabla_x b_x + \frac{1}{N_1} \int \int dx dz \, V_1(x-z) b_x^* b_z^* b_z b_x,
\]
\[
  \mathcal{H}_{N_2} = \int dy \, \nabla_y c_y^* \nabla_y c_y + \frac{1}{N_2} \int \int dy dz \, V_2(y-z) c_y^* c_z^* c_z c_y,
\]
and
\[
  \mathcal{T}_{N_1,N_2} = \frac{1}{N_1+N_2} \int \int dx dy \, V_{12}(x-y) b_x^* c_y^* c_y b_x.
\]

The Hamiltonian $\mathcal{H}_{N_1,N_2}$ conserves the number of particles in each factor of $\mathcal{F} \otimes \mathcal{F}$.
In fact, it is simple to verify that, for $j = 1, 2$,
\[
  [\mathcal{H}_{N_j}, \mathcal{N} \otimes I] = [\mathcal{H}_{N_j}, I \otimes \mathcal{N}] = [\mathcal{T}_{N_1,N_2}, \mathcal{N} \otimes I] = [\mathcal{T}_{N_1,N_2}, I \otimes \mathcal{N}] = 0.
\]
Furthermore, for fixed $N_1$ and $N_2$, the subspace
\[
  \mathcal{S}_{N_1,N_2} = \text{span} \{ \psi_{\underline{N_1}} \otimes \phi_{\underline{N_2}} \, | \, \psi_{\underline{N_1}} \in L^2_s(\R^{3N_1}), \ \phi_{\underline{N_2}} \in L^2_s(\R^{3N_2} \}
\]
is invariant by $\mathcal{H}_{N_1,N_2}$, and the Hamiltonian $\mathcal{H}_{N_1,N_2}$ restricted to $\mathcal{S}_{N_1,N_2}$ is equal to $H_{N_1,N_2}$.
Therefore, for initial data in $\mathcal{S}_{N_1,N_2}$, the time evolution generated by $\mathcal{H}_{N_1,N_2}$ reduces to the time evolution generated by $H_{N_1,N_2}$.

\textbf{The reduced density operator.}
For $\psi \in \mathcal{F} \otimes \mathcal{F}$, we define the reduced density operator $\gamma_\psi^{(1,1)}$ as the operator on $L^2(\R^3) \otimes L^2(\R^3)$ determined by the kernel
\begin{equation}
  \label{intKer}
  \gamma_\psi^{(1,1)}(x, y; x', y') = \frac{1}{\langle \psi, \mathcal{N} \otimes \mathcal{N} \psi \rangle} \langle \psi, b_{x'}^* c_{y'}^* b_x c_y \psi \rangle.
\end{equation}
We observe that $\text{Tr}_{1,1} \gamma_{\psi}^{(1,1)} = 1$.
If $\psi$ is in the subspace $\mathcal{S}_{N_1,N_2}$ of fixed number of particles, the above definition reduces to the definition of $\gamma_{N_1,N_2,t}^{(1,1)}$ given in the introduction.

\textbf{Coherent states.}
For $f \in L^2(\R^3)$, the Weyl operator on $\mathcal{F}$, denoted $W(f)$, is defined by
\[
  W(f) = \exp(a^*(f) - a(f)).
\]
The state $W(f) \Omega$ is called a coherent state.
We have
\begin{equation}
  \label{WOmega}
  W(f) \Omega = e^{-\|f\|^2/2} \sum_{n \ge 0} \frac{a^*(f)^n}{n!} \Omega = e^{-\|f\|^2/2} \sum_{n \ge 0} \frac{1}{\sqrt{n!}} f^{\otimes n}.
\end{equation}
This expression is obtained by using the identities
\[
  \exp(a^*(f) - a(f)) = e^{-\| f \|^2/2} \exp(a^*(f)) \exp(a(f))
\]
and $a(f) \Omega = 0$.
The first identity is obtained using $[a(f), a^*(f)] = \| f \|^2$.

For $f, g \in L^2(\R^3)$, we define
\[
  \mathcal{W}(f,g) = W(f) \otimes W(g),
\]
which is an operator on $\mathcal{F} \otimes \mathcal{F}$.
We set $\omega = \Omega \otimes \Omega$ (which we also call a vacuum state).
Thus $\mathcal{W}(f,g) \omega$ is a tensor product of coherent states (which we also call a coherent state).

In the following lemma, we have some important properties of the operator $\mathcal{W}(f,g)$ and the coherent state $\mathcal{W}(f,g)\omega$.
These properties follow easily from the corresponding well-known properties of Weyl operators (see \cite{RS-2007}, for example).

\begin{lemma}
  \label{l:weyl}
  Let $f, g \in L^2(\R^3)$ and $\omega = \Omega \otimes \Omega$.
  \begin{enumerate}[\rm (a)]
    \item
      The operator $\mathcal{W}(f,g)$ is unitary and
      \[
        \mathcal{W}(f,g)^* = \mathcal{W}(f,g)^{-1} = \mathcal{W}(-f,-g).
      \]
    \item
      We have
      \begin{align*}
        \mathcal{W}(f,g)^* b_x \mathcal{W}(f,g) & = b_x + f(x), & \qquad \mathcal{W}(f,g)^* b_x^* \mathcal{W}(f,g) & = b_x^* + \overline{f(x)}, \\
        \mathcal{W}(f,g)^* c_y \mathcal{W}(f,g) & = c_y + g(y), & \qquad \mathcal{W}(f,g)^* c_y^* \mathcal{W}(f,g) & = c_y^* + \overline{g(y)}.
      \end{align*}
    \item
      For $p, q \in \setl{0, 1}$, we have
      \[
        \la \mathcal{W}(f,g) \omega, (\mathcal{N}^p \otimes \mathcal{N}^q) \mathcal{W}(f,g) \omega \ra = \norm{f}^{2p} \norm{g}^{2q}.
      \]
  \end{enumerate}
\end{lemma}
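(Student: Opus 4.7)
The plan is to reduce each claim to the corresponding well-known property of the single Weyl operator $W(f)$ on $\mathcal{F}$ by exploiting the tensor-product factorization $\mathcal{W}(f,g)=W(f)\otimes W(g)$.

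For part (a), I would note that $W(f)$ is unitary with $W(f)^{*}=W(-f)=W(f)^{-1}$ (from $W(f)=\exp(a^{*}(f)-a(f))$ with $a^{*}(f)-a(f)$ anti-self-adjoint). Since tensor products of unitaries are unitary and $(A\otimes B)^{*}=A^{*}\otimes B^{*}$, I get $\mathcal{W}(f,g)^{*}=W(-f)\otimes W(-g)=\mathcal{W}(-f,-g)$, and this coincides with $\mathcal{W}(f,g)^{-1}$.

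For part (b), since $b_x=a_x\otimes I$ and $\mathcal{W}(f,g)=W(f)\otimes W(g)$, I can factor
\[
\mathcal{W}(f,g)^{*}b_x\mathcal{W}(f,g)=\bigl(W(f)^{*}a_x W(f)\bigr)\otimes\bigl(W(g)^{*}W(g)\bigr).
\]
The standard Weyl shift identity $W(f)^{*}a_x W(f)=a_x+f(x)$ (a consequence of the Baker--Campbell--Hausdorff formula, as recalled in \cite{RS-2007}) then gives $b_x+f(x)$. The three remaining identities follow in the same way, with $c_y=I\otimes a_y$ treated via the second factor and the adjoint relations obtained by taking $(\cdot)^{*}$ of the first pair.

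For part (c), I use that $\mathcal{W}(f,g)\omega=W(f)\Omega\otimes W(g)\Omega$ and that $\mathcal{N}^{p}\otimes\mathcal{N}^{q}$ respects the factorization, so
\[
\langle\mathcal{W}(f,g)\omega,(\mathcal{N}^{p}\otimes\mathcal{N}^{q})\mathcal{W}(f,g)\omega\rangle=\langle W(f)\Omega,\mathcal{N}^{p}W(f)\Omega\rangle\,\langle W(g)\Omega,\mathcal{N}^{q}W(g)\Omega\rangle.
\]
For $p=0$ each scalar factor equals $\|W(f)\Omega\|^{2}=1$ by (a); for $p=1$ I would compute $W(f)^{*}\mathcal{N}W(f)=\int dx\,(a_{x}^{*}+\overline{f(x)})(a_{x}+f(x))=\mathcal{N}+a^{*}(f)+a(f)+\|f\|^{2}$ using part (b) specialized to one factor, and then take the vacuum expectation, which annihilates the first three summands and leaves $\|f\|^{2}$. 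Combining the two factors yields $\|f\|^{2p}\|g\|^{2q}$.

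There is no genuine obstacle here: the whole lemma is a bookkeeping exercise that turns tensor-product structure into the known one-factor Weyl calculus. The only mild care needed is to check that the operator-valued distributional identity $W(f)^{*}a_{x}W(f)=a_{x}+f(x)$ is used only inside sesquilinear pairings against test functions in $L^{2}$, which is exactly how it appears after smearing against $\overline{h}(x)$ for $h\in L^{2}(\R^{3})$.
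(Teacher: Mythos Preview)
Your proposal is correct and matches the paper's approach exactly: the paper does not give a detailed proof but simply remarks that ``these properties follow easily from the corresponding well-known properties of Weyl operators (see \cite{RS-2007}, for example),'' which is precisely the reduction to the one-factor case via the tensor-product factorization that you carry out.
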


\textbf{Schr\"odinger dynamics.}
For $u, v \in H^1(\R^3)$ with $\norm{u}_{L^2} = 1$ and $\norm{v}_{L^2} = 1$, we set
\[
  \psi_{N_1,N_2} = \mathcal{W}(\sqrt{N_1} u, \sqrt{N_2} v) \omega
\]
and denote by $t \mapsto \psi_t = e^{-i t \mathcal{H}_{N_1,N_2}} \psi_{N_1,N_2}$ the solution to the Schr\"odinger equation $i \partial_t \psi_t = \mathcal{H}_{N_1,N_2} \psi_t$ with initial condition $\psi_0 = \psi_{N_1,N_2}$.
We will study the family of solutions $\{ \psi_t \}_{N_1,N_2}$ as $N_1$ and $N_2$ go to infinity.

We are ready to state our main result:

\begin{theorem}
  \label{t:coherent}
  Suppose that the functions $V = V_1$, $V = V_2$ and $V = V_{12}$ satisfy the operator inequality
  \[
    V^2 \le K (1 - \Delta)
  \]
  on $L^2(\R^3)$ for some constant $K > 0$.
  Suppose that $(N_1)$ and $(N_2)$ are sequences of positive integers and $c_1$ and $c_2$ are real numbers which obey $N_1 \to \infty$ and $N_2 \to \infty$ with
  \begin{equation}
    \label{condseq}
    \left| \frac{N_1}{N_1 + N_2} - c_1 \right| \le \frac{D}{N_2} \qquad \text{and} \qquad \left| \frac{N_2}{N_1 + N_2} - c_2 \right| \le \frac{D}{N_1},
  \end{equation}
  for some constant $D > 0$, where $c_1 \geq 0$ and $c_2 \geq 0$ with $c_1 + c_2 = 1$.
  Let $\gamma_{t}^{(1,1)}$ be the reduced density operator associated to the solution
  \[
    \psi_t = e^{-i t \mathcal{H}_{N_1,N_2}} \mathcal{W}(\sqrt{N_1} u, \sqrt{N_2} v) \omega
  \]
  to the Schr\"{o}dinger equation.
  Then
  \begin{equation}
    \label{ineqcoherent}
    \text{\rm Tr} \, \Big| \gamma_{t}^{(1,1)} - |u_t \otimes v_t \rangle \langle u_t \otimes v_t| \, \Big| \le C e^{\gamma t} \left[ \frac{1}{\sqrt{N_1}} + \frac{1}{\sqrt{N_2}} \right]
  \end{equation}
  for all $t \ge 0$, where $C$ and $\gamma$ are positive constants that do not depend on $N_1$ and $N_2$, and $u_t$ and $v_t$ are the solutions of the system of equations \eqref{sysHartree} with initial datum $u$ and $v$.
\end{theorem}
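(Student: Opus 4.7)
I would follow the Fock space approach of Rodnianski--Schlein \cite{RS-2007}, adapted to the tensor product setting. The central object is the \emph{fluctuation vector}
\[
  \phi_t \;:=\; \mathcal{W}(\sqrt{N_1}\,u_t,\sqrt{N_2}\,v_t)^*\,\psi_t,\qquad \phi_0=\omega,
\]
obtained by pulling back the true evolution by the time-dependent coherent state built on the Hartree flow. The proof would then split into two propositions: (i) the trace-norm error in \eqref{ineqcoherent} is controlled by $(N_1^{-1/2}+N_2^{-1/2})$ times $\langle\phi_t,(\mathcal{N}\otimes I+I\otimes\mathcal{N}+1)\phi_t\rangle^{1/2}$; (ii) this expectation grows at most like $e^{\gamma t}$, uniformly in $N_1,N_2$. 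The bound \eqref{ineqcoherent} then follows by combining (i) and (ii).

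\textbf{Reducing the trace norm to a fluctuation bound.} For (i), I would insert the identity $\psi_t=\mathcal{W}(\sqrt{N_1}u_t,\sqrt{N_2}v_t)\phi_t$ into the kernel \eqref{intKer} and apply Lemma \ref{l:weyl}(b) to shift each of $b_x,b_{x'}^*,c_y,c_{y'}^*$ into a c-number part $\sqrt{N_p}u_t(x),\sqrt{N_p}v_t(y),\ldots$ plus an operator part. The purely c-number contribution, together with the normalisation computed from Lemma \ref{l:weyl}(c), reproduces the kernel of $|u_t\otimes v_t\rangle\langle u_t\otimes v_t|$ up to a relative error of order $\langle\phi_t,(\mathcal{N}\otimes I+I\otimes\mathcal{N})\phi_t\rangle/(N_1 N_2)$. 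Each of the remaining fifteen mixed terms contains at least one of $b,b^*,c,c^*$; estimating them via the trace-class/bounded-operator duality and the elementary bounds of Lemma \ref{l:relbN} produces contributions of the claimed size.

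\textbf{Growth of the fluctuation.} For (ii), I would derive the generator
\[
  \mathcal{L}_{N_1,N_2}(t) \;=\; \big(i\partial_t\mathcal{W}^*\big)\mathcal{W} \;+\; \mathcal{W}^*\,\mathcal{H}_{N_1,N_2}\,\mathcal{W}
\]
of $i\partial_t\phi_t=\mathcal{L}_{N_1,N_2}(t)\phi_t$ by substituting the shifts of Lemma \ref{l:weyl}(b) into $\mathcal{H}_{N_1,N_2}$. This produces a polynomial in $b,b^*,c,c^*$ with terms of degree $0,1,2,3,4$: the scalar pieces are irrelevant phases, while the linear pieces cancel exactly when $(u_t,v_t)$ solve \eqref{sysHartree} with coefficients $c_1,c_2$ prescribed by the limits of $N_p/(N_1+N_2)$, the mismatch in \eqref{condseq} producing an admissible $O(1/\sqrt{N_p})$ error. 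Setting $g(t):=\langle\phi_t,(\mathcal{N}\otimes I+I\otimes\mathcal{N}+1)\phi_t\rangle$ and computing $\dot g(t)=i\langle\phi_t,[\mathcal{L}_{N_1,N_2}(t),\,\mathcal{N}\otimes I+I\otimes\mathcal{N}]\phi_t\rangle$, only the number-non-conserving quadratic and cubic pieces contribute; bounding each commutator by a constant multiple of $\mathcal{N}\otimes I+I\otimes\mathcal{N}+1$ and applying Gronwall's inequality yields $g(t)\le e^{\gamma t}$.

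\textbf{Main obstacle.} The hardest step is the operator inequality for the commutators with $\mathcal{N}\otimes I+I\otimes\mathcal{N}$ of the cubic terms built from $V_p$ and, above all, of the mixed cubic term arising from $\mathcal{T}_{N_1,N_2}$. When $V$ has a Coulomb singularity these expressions cannot be dominated by $\mathcal{N}$ alone: the hypothesis $V^2\le K(1-\Delta)$ must be used together with Cauchy--Schwarz and the $H^1$ control on $u_t,v_t$ afforded by the conserved energy of \eqref{sysHartree}, trading the singularity for a gradient and invoking Lemma \ref{l:relbN} to bound the resulting expression by $(\mathcal{N}+1)\otimes I+I\otimes(\mathcal{N}+1)$. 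The inter-species term is particularly delicate because it couples the two Fock factors, and its prefactor $(N_1+N_2)^{-1}$ must be carefully balanced against $c_1/N_2+c_2/N_1$ using \eqref{condseq} so that the shift cancellation with the Hartree system is preserved at the required $1/\sqrt{N_1}+1/\sqrt{N_2}$ rate.
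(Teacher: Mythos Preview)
Your overall architecture matches the paper's: define the fluctuation vector $\phi_t=\mathcal{W}_t^*\psi_t$, expand the kernel of $\gamma_t^{(1,1)}$ via Lemma~\ref{l:weyl}(b) into the rank-one piece plus fifteen remainders, and then control the growth of number-operator moments of $\phi_t$. Part~(i) of your plan is essentially Proposition~\ref{p:part1} of the paper, with one caveat: some of the fifteen terms (e.g.\ those involving $b_{x'}^*c_{y'}^*b_xc_y$) require $\langle\phi_t,(\mathcal{N}\otimes\mathcal{N})\phi_t\rangle$, hence a \emph{second} moment of the total number, not just $\langle\phi_t,(\mathcal{N}\otimes I+I\otimes\mathcal{N}+1)\phi_t\rangle^{1/2}$. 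This is why the paper proves Proposition~\ref{p:part2} for all powers $j$ and invokes $j=2$ as well as $j=1$.

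The real gap is in part~(ii). Your claim that each commutator of $\mathcal{L}_{N_1,N_2}(t)$ with $\mathcal{N}\otimes I+I\otimes\mathcal{N}$ is bounded by a constant multiple of $\mathcal{N}\otimes I+I\otimes\mathcal{N}+1$ fails for the cubic pieces. A typical cubic contribution is
\[
\frac{1}{\sqrt{N_1}}\int dx\,dz\,V_1(x-z)\,u_t(z)\,b_x^*b_z^*b_x,
\]
whose commutator with $\mathcal{N}\otimes I$ reproduces an expression of the same form. Estimating it by Cauchy--Schwarz and Lemma~\ref{l:relbN}, after absorbing the potential via $V^2\le K(1-\Delta)$ and the $H^1$ bound on $u_t$, yields at best
\[
\frac{C}{\sqrt{N_1}}\,\big\langle\phi_t,\big((\mathcal{N}+1)^{3/2}\otimes I\big)\phi_t\big\rangle,
\]
i.e.\ a $3/2$ power of the number, not a first power. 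The prefactor $N_1^{-1/2}$ does not compensate this unless one already knows $\mathcal{N}\lesssim N_1$ on $\phi_t$, which is exactly what you are trying to prove. Hence Gronwall on $g(t)$ does not close, and the resolution you describe (trading the Coulomb singularity for a gradient) addresses only the regularity issue, not this structural loss of a half-power of $\mathcal{N}$.

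The paper circumvents this by introducing a \emph{truncated} fluctuation dynamics $U^c(t,s)$ in which the cubic terms carry a spectral cutoff $\chi(\mathcal{N}\le M_p)$ in the relevant factor. With the cutoff the cubic commutator is bounded by $\sqrt{M_p/N_p}\,(\mathcal{N}\otimes I+I\otimes\mathcal{N}+1)$, so Gronwall closes for $U^c$ (Lemma~\ref{l:step1}). One then needs (a) crude a~priori bounds for the untruncated $U$ that lose powers of $N_1+N_2$ (Lemma~\ref{l:step2}), and (b) a comparison of $U$ with $U^c$ using Duhamel, where the difference $\mathcal{L}-\mathcal{L}^c$ carries the complementary projection $\chi(\mathcal{N}>M_p)$ and is therefore small in high moments (Lemma~\ref{l:step3}). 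Choosing $M_p=N_p$ balances these estimates and yields Proposition~\ref{p:part2}. Your sketch is missing this entire truncation-and-comparison mechanism, which is the technical heart of the proof.
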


We make the following observations about Theorem \ref{t:coherent}:
\begin{enumerate}
  \item
    The hypothesis on the interaction potentials $V_1$, $V_2$ and $V_{12}$ allow the Coulomb potential.
    In addition, the theorem holds for both attractive and repulsive potentials.
  \item
    Under the hypothesis on the interaction potentials and the initial datum $u$ and $v$, the system of equations \eqref{sysHartree} of Hartree type is globally well-posed in time.
    This follows from conservation of mass and energy.
\end{enumerate}

If $\psi_t$ was equal to the coherent state $\mathcal{W}(\sqrt{N_1} u_t, \sqrt{N_2} v_t) \omega$, the left hand side of \eqref{ineqcoherent} would vanish identically (and Theorem \ref{t:coherent} would be true with $C = 0$).
This observation suggests a strategy to prove Theorem \ref{t:coherent}.
Namely, the idea is to show that $\psi_t \approx \mathcal{W}(\sqrt{N_1} u_t, \sqrt{N_2} v_t) \omega$ in a certain sense.
In order to do this, we make a time-dependent change of variable which transforms the Schr\"odinger equation into a non-autonomous equation, called the fluctuation equation, whose initial datum is the vacuum state.
Again, if the vacuum state was a stationary solution to this equation, the left hand side of \eqref{ineqcoherent} would vanish identically.
The vacuum state is not a stationary solution, but the actual solution to the fluctuation equation has a property that suffices to obtain Theorem \ref{t:coherent}.
Namely, the expected value of the number of particles for the solution does not grow too fast as $N_1$ and $N_2$ go to infinity.
Below we describe how to implement this strategy to prove Theorem \ref{t:coherent}.
As we mentioned earlier, we will use the methods developed in \cite{RS-2007} (see also \cite{Benedikter-DeOliveira-Schlein_QuantitativeGP-2012_CPAM2015}).

To prove Theorem \ref{t:coherent}, we will proceed as follows:
We will consider a tensor product of coherent states as initial data.
The Hartree dynamics emerges as the main component of the evolution of this initial data (in the mean field limit).
The problem reduces to study fluctuations around this main component.
(These fluctuations are described by a two-parameter unitary group $U_{N_1,N_2}(s,t)$ called fluctuation dynamics.)
We will prove that the number of fluctuations is controlled, in certain sense, for large $N_1$ and $N_2$.
This implies the statement in Theorem \ref{t:coherent}.

We now turn to the proof of the theorem.

\textbf{Fluctuation dynamics.}
We define $\alpha(t) = \int_0^t F_{N_1,N_2}(s) \, ds$, where $F_{N_1,N_2}$ is a real-valued function that we will choose later.
Let $u_t$ and $v_t$ be the solutions to the Hartree system \eqref{sysHartree}.
For $t, s \in \R$, we set
\[
  U(t,s) = e^{i(\alpha(t) - \alpha(s))} \mathcal{W}(\sqrt{N_1} u_t, \sqrt{N_2} v_t)^* e^{-i(t-s) \mathcal{H}_{N_1,N_2}} \mathcal{W}(\sqrt{N_1} u_s, \sqrt{N_2} v_s).
\]
We refer to the operator $U(t,s)$ as the fluctuation dynamics.
We abbreviate
\[
  \mathcal{W}_t = \mathcal{W}(\sqrt{N_1} u_t, \sqrt{N_2} v_t) \qquad \text{and} \qquad U_{t,s} = U(t,s).
\]
Thus, we may write $U_{t,s} = e^{i (\alpha(t) - \alpha(s))} \mathcal{W}_t e^{-i(t-s) \mathcal{H}_{N_1,N_2}} \mathcal{W}_s$.
We also define
\[
  \omega_t = e^{-i \alpha(t)} U(t,0) \omega.
\]
Using the above definitions, it is simple to verify that
\[
  \psi_t = \mathcal{W}_t \, \omega_t.
\]

For each $s \in \R$, the fluctuation dynamics satisfies the equation
\[
  i \partial_t U_{t,s} = \mathcal{L}(t) U_{t,s} \qquad \text{with} \qquad U_{s,s} = I
\]
where
\[
  \mathcal{L}(t) = (i \partial_t \mathcal{W}_t^*) \mathcal{W}_t + \mathcal{W}_t^* \mathcal{H}_{N_1,N_2} \mathcal{W}_t - F_{N_1,N_2}(t).
\]
We next calculate the first two terms in this expression.

To calculate $(i \partial_t \mathcal{W}_t^*) \mathcal{W}_t$, we use the identity
\[
  (\partial_t e^{-A(t)}) e^{A(t)} = - \int_0^1 d\lambda \, e^{-A(t)\lambda} \dot{A}(t) e^{A(t) \lambda}.
\]
Here $A(t)$ is an operator.
(This identity is obtained using the definition of time derivative and the formula $e^{-A}Be^A = B - \int d\lambda e^{-\lambda A} [A,B] e^{\lambda A}$.)
We find that
\[
  \begin{split}
    (i \partial_t \mathcal{W}_t^*) \mathcal{W}_t & = \sqrt{N}_1 \phi( -i \partial_t u_t) \otimes I + \sqrt{N}_2 I \otimes \phi(-i \partial_t v_t) \\
    & \quad + N_1 \Re \langle u_t, i \partial_t u_t \rangle + N_2 \Re \langle v_t, i \partial_t v_t \rangle.
  \end{split}
\]

To calculate $\mathcal{W}_t^* \mathcal{H}_{N_1,N_2} \mathcal{W}_t$, we use Lemma \ref{l:weyl}(b).
We obtain
\[
  \begin{split}
    & \mathcal{W}_t^* \mathcal{H}_{N_1,N_2} \mathcal{W}_t \\
    & = \mathcal{H}_{N_1,N_2} + C_{N_1,N_2}(t) + Q_{N_1,N_2}(t) \\
    & \quad + \sqrt{N_1} \phi \left( -\Delta u_t + (V_1 * |u_t|^2) u_t + \frac{N_2}{N_1+N_2} (V_{12} * |v_t|^2) u_t \right) \otimes I \\
    & \quad + \sqrt{N_2} I \otimes \phi \left( -\Delta v_t + (V_2 * |v_t|^2) v_t + \frac{N_1}{N_1+N_2} (V_{12} * |u_t|^2) v_t \right) \\
    & \quad + G_{N_1,N_2}(t)
  \end{split}
\]
where
\[
  \begin{split}
    C_{N_1,N_2}(t) & = \frac{1}{\sqrt{N_1}} \int dx dz \, V_1(x-z) b_x^* (u_t(z) b_z^* + \overline{u_t}(z) b_z) b_x \\
    & \quad + \frac{1}{\sqrt{N_2}} \int dy dz \, V_2(y-z) c_y^* (v_t(z) c_z^* + \overline{v_t}(z) c_z) c_y \\
    & \quad + \frac{\sqrt{N_1}}{N_1 + N_2} \int dx dy \, V_{12}(x-y) c_y^* ( u_t(x) b_x^* + \overline{u_t}(x) b_x ) c_y \\
    & \quad + \frac{\sqrt{N_2}}{N_1 + N_2} \int dx dy \, V_{12}(x-y) b_x^* ( v_t(y) c_y^* + \overline{v_t}(y) c_y ) b_x,
  \end{split}
\]
\[
  \begin{split}
    & Q_{N_1,N_2}(t) \\
    & = \int dx \, (V_1 * |u_t|^2)(x) b_x^* b_x + \frac{N_2}{N_1+N_2} \int dx \, (V_{12} * |v_t|^2)(x) b_x^* b_x \\
    & \quad + \int dy \, (V_2 * |v_t|^2)(y) c_y^* c_y + \frac{N_2}{N_1+N_2} \int dy \, (V_{12} * |u_t|^2)(y) c_y^* c_y \\
    & \quad + \int dx dz \, V_1(x-z) \overline{u_t}(z) u_t(x) b_x^* b_z + \int dy dz \, V_2(y-z) \overline{v_t}(z) v_t(y) c_y^* c_z \\
    & \quad + \frac{1}{2} \int dx dz \, V_1(x-z) ( u_t(z) u_t (x) b_x^* b_z^* + \overline{u_t}(z) \overline{u_t}(x) b_x b_z ) \\
    & \quad + \frac{1}{2} \int dy dz \, V_2(y-z) ( v_t(z) v_t(y) c_y^* c_z^* + \overline{v_t}(z) \overline{v_t}(y) c_y c_x ) \\
    & \quad + \frac{\sqrt{N_1 N_2}}{N_1 + N_2} \int dx dy \, V_{12}(x-y) ( \overline{v_t}(y) u_t(x) b_x^* c_y + \overline{u_t}(x) v_t(y) c_y^* b_x ) \\
    & \quad + \frac{\sqrt{N_1 N_2}}{N_1 + N_2} \int dx dy \, V_{12}(x-y) ( u_t(x) v_t(y) b_x^* c_y^* + \overline{u_t}(x) \overline{v_t}(y) b_x c_y )
  \end{split}
\]
and
\[
  \begin{split}
    G_{N_1,N_2}(t) & = N_1 \int dx \, |\nabla u_t(x)|^2 + \frac{N_1}{2} \int dx dy \, V_1(x-y) |u_t(x)|^2 |u_t(y)|^2 \\
    & \quad + N_2 \int dy \, |\nabla v_t(y)|^2 + \frac{N_2}{2} \int dx dy \, V_2(x-y) |v_t(x)|^2 |v_y(y)|^2 \\
    & \quad + \frac{N_1 N_2}{N_1+N_2} \int dx dy \, V_{12}(x-y) |u_t(x)|^2 |v_t(y)|^2.
  \end{split}
\]
Here $C_{N_1,N_2}(t)$ and $Q_{N_1,N_2}(t)$ denote operators which are respectively cubic and quadratic in creation or annihilation operators.
The scalar operator $G_{N_1,N_2}(t)$ does not depend on creation or annihilation operators.

By choosing
\begin{equation}
  \label{F}
  F_{N_1,N_2}(t) = N_1 \Re \langle u_t, i \partial_t u_t \rangle + N_2 \Re \langle v_t, i \partial_t v_t \rangle + G_{N_1,N_2}(t),
\end{equation}
we obtain
\[
  \begin{split}
    & \mathcal{L}(t) \\
    & = \mathcal{H}_{N_1,N_2} + C_{N_1,N_2}(t) + Q_{N_1,N_2}(t) \\
    & \quad + \sqrt{N_1} \phi \left( -i \partial_t u_t -\Delta u_t + (V_1 * |u_t|^2) u_t + \frac{N_2}{N_1+N_2} (V_{12} * |v_t|^2) u_t \right) \otimes I \\
    & \quad + \sqrt{N_2} I \otimes \phi \left( -i \partial_t v_t -\Delta v_t + (V_2 * |v_t|^2) v_t + \frac{N_1}{N_1+N_2} (V_{12} * |u_t|^2) v_t \right).
  \end{split}
\]
Since $u_t$ and $v_t$ satisfy \eqref{sysHartree}, we get
\[
  \mathcal{L}(t) = \mathcal{H}_{N_1,N_2} + C_{N_1,N_2}(t) + Q_{N_1,N_2}(t) + R_{N_1,N_2}(t)
\]
where
\[
  \begin{split}
    R_{N_1,N_2}(t) & = \sqrt{N_1} \left( \frac{N_2}{N_1+N_2} - c_2 \right) \phi \big( (V_{12} * |v_t|^2) u_t \big) \otimes I \\
    & \quad + \sqrt{N_2} \left( \frac{N_1}{N_1+N_2} - c_1 \right) I \otimes \phi \big( (V_{12} * |u_t|^2) v_t \big).
  \end{split}
\]

\section{Proof of Theorem \ref{t:coherent}.}
\label{s:proofofcoherent}

For $\mathfrak{h} = L^2(\R^3) \otimes L^2(\R^3)$, let $\mathcal{L}^1(\mathfrak{h})$ be the space of trace class operators on $\mathfrak{h}$ together with the trace norm $\| \cdot \|_1$, and let $\text{Com}(\mathfrak{h})$ be the space of compact operators on $\mathfrak{h}$ together with the operator norm $\| \cdot \|$.
We observe that $\mathcal{L}^1(\mathfrak{h}) \simeq \text{Com}(\mathfrak{h})^*$ with respect to the mapping $\mathcal{L}^1(\mathfrak{h}) \to \text{Com}(\mathfrak{h})^*$ defined by $T \mapsto \text{Tr}(T \, \cdot \,)$.
This mapping is an isometry, an isomorphism, and onto.
Thus, if $| \text{Tr}(T J)| \le C \| J \|$ for all $J \in \text{Com}(\mathfrak{h})$ and $C \!>\! 0$, then $\| T \|_1 = \| \text{Tr}(T \cdot ) \| \le C$.
Therefore, to prove the Theorem \ref{t:coherent}, it suffices to show that
\[
  \Big| \text{Tr} \, J \big( \gamma_{t}^{(1,1)} - |u_t \otimes v_t \rangle \langle u_t \otimes v_t| \, \big) \Big| \le \| J \| C e^{\gamma t} \left[ \frac{1}{\sqrt{N_1}} + \frac{1}{\sqrt{N_2}} \right]
\]
for all $J \in \text{Com}(\mathfrak{h})$.

Since the Hamiltonian conserves the number of particles in each factor of $\mathcal{F} \otimes \mathcal{F}$, by Lemma \ref{l:weyl}(c), we have $\langle \psi_t, \mathcal{N} \otimes \mathcal{N} \psi_t \rangle = N_1 N_2$.
Thus the kernel of $\gamma_{t}^{(1,1)}$ is given by
\begin{equation}
  \label{expgamma}
  \begin{split}
    & \gamma_{t}^{(1,1)}(x, y; x', y') \\
    & = \frac{1}{N_1 N_2} \langle \mathcal{W}_t \omega_t, b_{x'}^* c_{y'}^* b_x c_y \mathcal{W}_t \omega_t \rangle \\
    & = \frac{1}{N_1 N_2} \langle \omega_t, [b_{x'}^* + \sqrt{N_1} \overline{u_t}(x')] [c_{y'}^* + \sqrt{N_2} \overline{v_t}(y')] \\
    & \qquad \qquad \qquad \times [b_x + \sqrt{N_1} u_t(x)] [c_y + \sqrt{N_2} v_t(y)] \omega_t \rangle \\
    & = \overline{u_t}(x') \overline{v_t}(y') u_t(x) v_t(y) + \sum_{j=1}^{15} r_j
  \end{split}
\end{equation}
where
\begin{align*}
  r_1 & = N_1^{-1/2} \, \overline{v_t}(y') u_t(x) v_t(y) \langle \omega_t, b_{x'}^* \omega_t \rangle, \\
  r_2 & = N_2^{-1/2} \, \overline{u_t}(x') u_t(x) v_t(y) \langle \omega_t, c_{y'}^* \omega_t \rangle, \\
  r_3 & = N_1^{-1/2} \, \overline{u_t}(x') \overline{v_t}(y') v_t(y) \langle \omega_t, b_x \omega_t \rangle, \\
  r_4 & = N_2^{-1/2} \, \overline{u_t}(x') \overline{v_t}(y') u_t(x) \langle \omega_t, c_y \omega_t \rangle, \\
  r_5 & = N_1^{-1/2} N_2^{-1/2} \, u_t(x) v_t(y) \langle \omega_t, b_{x'}^* c_{y'}^* \omega_t \rangle, \\
  r_6 & = N_1^{-1} \, \overline{v_t}(y') v_t(y) \langle \omega_t, b_{x'}^* b_x \omega_t \rangle, \\
  r_7 & = N_1^{-1/2} N_2^{-1/2} \, \overline{v_t}(y') u_t(x) \langle \omega_t, b_{x'}^* c_y \omega_t \rangle, \\
  r_8 & = N_1^{-1/2} N_2^{-1/2} \, \overline{u_t}(x') v_t(y) \langle \omega_t, c_{y'}^* b_x \omega_t \rangle, \\
  r_9 & = N_2^{-1} \, \overline{u_t}(x') u_t(x) \langle \omega_t, c_{y'}^* c_y \omega_t \rangle, \\
  r_{10} & = N_1^{-1/2} N_2^{-1/2} \, \overline{u_t}(x') \overline{v_t}(y') \langle \omega_t, b_x c_y \omega_t \rangle, \\
  r_{11} & = N_1 N_2^{-1/2} \, v_t(y) \langle \omega_t, b_{x'}^* c_{y'}^* b_x \omega_t \rangle, \\
  r_{12} & = N_1^{-1/2} N_2 \, u_t(x) \langle \omega_t, b_{x'}^* c_{y'}^* c_y \omega_t \rangle, \\
  r_{13} & = N_1 N_2^{-1/2} \, \overline{v_t}(y') \langle \omega_t, b_{x'}^* b_x c_y \omega_t \rangle, \\
  r_{14} & = N_1^{-1/2} N_2 \, \overline{u_t}(x') \langle \omega_t, c_{y'}^* b_x c_y \omega_t \rangle, \\
  r_{15} & = N_1^{-1} N_2^{-1} \, \langle \omega_t, b_{x'}^* c_{y'}^* b_x c_y \omega_t \rangle.
\end{align*}
We observe that $r_j = r_j(x,y;x',y')$ for $j = 1, \dots, 15$.

Let $J$ be a compact operator on $L^2(\R^3) \otimes L^2(\R^3)$ with integral kernel $J(x',y'; x,y)$ (in the sense of distributions).
We want to estimate the absolute value of
\begin{align*}
  & \text{\rm Tr} \, J \, \big( \gamma_{t}^{(1,1)} - | u_t \otimes v_t \rangle \langle u_t \otimes v_t | \big) \\
  & = \int dx' dy' dx dy \, J(x',y';x,y) \big[ \gamma_{t}^{(1,1)}(x,y;x'y') \!-\! \overline{u_t}(x') \overline{v_t}(y') u_t(x) v_t(y) \big].
\end{align*}
We have the following proposition, which we prove later in Section \ref{s:proofofp:part1}.

\begin{prop}
  \label{p:part1}
  Let $J$ be a compact operator on $L^2(\R^3) \otimes L^2(\R^3)$.
  We have
  \[
    \Big| \text{\rm Tr} J \, \big( \gamma_{t}^{(1,1)} - | u_t \otimes v_t \rangle \langle u_t \otimes v_t | \big) \Big| \le \| J \| \sum_{j=1}^9 p_j
  \]
  where
  \begin{align*}
    p_1 & = 2 N_1^{-1/2} \langle \omega_t, (\mathcal{N} \otimes I) \omega_t \rangle^{1/2}, \\
    p_2 & = 2 N_2^{-1/2} \langle \omega_t, (I \otimes \mathcal{N}) \omega_t \rangle^{1/2}, \\
    p_3 & = 2 N_1^{-1/2} N_2^{-1/2}  \langle \omega_t, (\mathcal{N} \otimes \mathcal{N}) \omega_t \rangle^{1/2}, \\
    p_4 & = 2 N_1^{-1/2} N_2^{-1/2} \langle \omega_t, (\mathcal{N} \otimes I) \omega_t \rangle^{1/2} \langle \omega_t, (I \otimes \mathcal{N}) \omega_t \rangle^{1/2}, \\
    p_5 & = N_1^{-1} \langle \omega_t, (\mathcal{N} \otimes I) \omega_t \rangle, \\
    p_6 & = N_2^{-1} \langle \omega_t, (I \otimes \mathcal{N}) \omega_t \rangle, \\
    p_7 & = 2 N_1^{-1} N_2^{-1/2} \langle \omega_t, (\mathcal{N} \otimes \mathcal{N}) \omega_t \rangle^{1/2} \langle \omega_t, (\mathcal{N} \otimes I) \omega_t \rangle^{1/2}, \\
    p_8 & = 2 N_1^{-1/2} N_2^{-1} \langle \omega_t, (\mathcal{N} \otimes \mathcal{N}) \omega_t \rangle^{1/2} \langle \omega_t, (I \otimes \mathcal{N}) \omega_t \rangle^{1/2}, \\
    p_9 & = N_1^{-1} N_2^{-1} \langle \omega_t, (\mathcal{N} \otimes \mathcal{N}) \omega_t \rangle.
  \end{align*}
  We observe that $p_j = p_j(t)$ for $j = 1, \dots, 9$.
\end{prop}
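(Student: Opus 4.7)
The plan is to substitute the expansion \eqref{expgamma} of $\gamma_{t}^{(1,1)}$ into $\mathrm{Tr}\,J\big(\gamma_{t}^{(1,1)}-|u_t\otimes v_t\ra\la u_t\otimes v_t|\big)$, observe that the principal (factorised) term cancels exactly, and then estimate the fifteen remainder integrals $\int J\,r_j$ individually. The bounds are then to be grouped into the nine $p_j$: the factor $2$ that appears in $p_1$--$p_4$ and $p_7,p_8$ is to come from combining each $r_j$ with a ``conjugate'' partner whose creation and annihilation roles are swapped (namely $r_1\leftrightarrow r_3$, $r_2\leftrightarrow r_4$, $r_5\leftrightarrow r_{10}$, $r_7\leftrightarrow r_8$, $r_{11}\leftrightarrow r_{13}$, $r_{12}\leftrightarrow r_{14}$), while $r_6$, $r_9$, $r_{15}$ individually produce $p_5$, $p_6$, $p_9$.

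The uniform mechanism I would use for each $r_j$ is the following. Every $r_j$ has the schematic form $N_1^{-a}N_2^{-b}\,\chi_j(x,y;x',y')\,\la\omega_t,A_j^*B_j\,\omega_t\ra$, where $\chi_j$ is built from $u_t,v_t,\overline{u_t},\overline{v_t}$ evaluated at appropriate variables and $A_j,B_j$ are products of the distributions $b_\cdot,c_\cdot$ carrying the labels of the four free variables (in some terms, one of $A_j, B_j$ is absent). Rewriting the expectation as $\la A_j\omega_t,B_j\omega_t\ra$ and applying Cauchy--Schwarz pointwise gives $|\la\omega_t,A_j^*B_j\omega_t\ra|\le\|A_j\omega_t\|\,\|B_j\omega_t\|$, so the integrand $|J(x',y';x,y)\,r_j|$ is dominated by $N_1^{-a}N_2^{-b}\,|J(x',y';x,y)|\,F_j(x,y)\,G_j(x',y')$ for explicit non-negative functions $F_j,G_j$ on $\R^6$. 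Recognising $\int J(x',y';x,y)F_j(x,y)G_j(x',y')\,\ud x\,\ud y\,\ud x'\,\ud y'$ as the bilinear pairing $\la\overline{G_j},J\,F_j\ra_{L^2(\R^6)}$, the elementary bound $|\la\overline{G_j},J\,F_j\ra|\le\|J\|\,\|F_j\|_{L^2}\|G_j\|_{L^2}$ reduces each estimate to computing two $L^2(\R^6)$ norms.

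The $L^2$ norms are to be evaluated via a short list of identities following from the canonical commutation relations and $\|u_t\|=\|v_t\|=1$: $\int\|b_x\omega_t\|^2\,\ud x=\la\omega_t,(\mathcal{N}\otimes I)\omega_t\ra$, $\int\|c_y\omega_t\|^2\,\ud y=\la\omega_t,(I\otimes\mathcal{N})\omega_t\ra$, and $\iint\|b_xc_y\omega_t\|^2\,\ud x\,\ud y=\la\omega_t,(\mathcal{N}\otimes\mathcal{N})\omega_t\ra$ (using $[b_x,c_y]=[b_x,c_y^*]=0$ to rearrange $\int\!\!\int c_y^*b_x^*b_xc_y\,\ud x\,\ud y=(\mathcal{N}\otimes I)(I\otimes\mathcal{N})$), together with the Cauchy--Schwarz consequences $\int|\la\omega_t,b_x\omega_t\ra|^2\,\ud x\le\la\omega_t,(\mathcal{N}\otimes I)\omega_t\ra$ and similarly for $c_y$ and for $b_xc_y$. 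Combined with the overall prefactor $1/(N_1N_2)$ from \eqref{intKer}---which converts the Weyl-shift factors $(\sqrt{N_1})^p(\sqrt{N_2})^q$ into the powers $N_1^{(p-2)/2}N_2^{(q-2)/2}$---each $r_j$ delivers exactly one of the nine advertised bounds $p_j$. Lemma~\ref{l:relbN} is the underlying input for these norm identities.

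I do not anticipate a genuinely hard step: this is a bookkeeping exercise of fifteen estimates along a single template. The main care required lies in (i) correctly pairing the conjugate terms so that the factor $2$ emerges rather than two separate contributions, (ii) tracking the arithmetic of the powers of $N_1,N_2$ that survive after the $1/(N_1N_2)$ normalisation so that the stated prefactor in each $p_j$ comes out right, and (iii) interpreting the distribution-valued quantities $\|b_x\omega_t\|$ and $\la\omega_t,b_x\omega_t\ra$ as $L^2$ densities against $\ud x$, which is standard.
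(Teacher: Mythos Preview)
Your overall strategy and the grouping of the fifteen $r_j$ into the nine $p_j$ match the paper's proof exactly. There is, however, a genuine slip in your template. You first bound the integrand pointwise, writing $|J(x',y';x,y)\,r_j|\le N_1^{-a}N_2^{-b}\,|J(x',y';x,y)|\,F_j(x,y)\,G_j(x',y')$, and then reinterpret the resulting integral as the pairing $\la\overline{G_j},J\,F_j\ra$. These two steps are incompatible: after taking the pointwise modulus you are integrating against $|J(\cdot\,;\cdot)|$, not against $J$. For a general compact $J$ the kernel is only a distribution, so $|J(\cdot\,;\cdot)|$ is not even defined; and even when $J$ has a function kernel, the integral operator with kernel $|J(\cdot\,;\cdot)|$ need not have operator norm controlled by $\|J\|$. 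So the bound $|\la\overline{G_j},J\,F_j\ra|\le\|J\|\,\|F_j\|\,\|G_j\|$ does not apply to what you actually have.

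The paper repairs this by never taking the modulus of $J$. For $r_1$, for instance, one writes
\[
\int J\,r_1 \;=\; N_1^{-1/2}\int\ud x'\ud y'\,\Big\la\, v_t(y')\,b_{x'}\omega_t,\ \Big(\int J(x',y';x,y)\,u_t(x)v_t(y)\,\ud x\,\ud y\Big)\,\omega_t\,\Big\ra,
\]
recognises the inner $(x,y)$-integral as $(J(u_t\otimes v_t))(x',y')$, applies Cauchy--Schwarz first in Fock space and then in $L^2(\ud x'\ud y')$, and only at that stage uses $\|J(u_t\otimes v_t)\|_{L^2}\le\|J\|$. The same reorganisation works for every $r_j$: keep $J$ intact, carry out the integration over one pair of variables as the action of $J$ (or $J^*$) on an explicit $L^2(\R^6)$ function, and apply Cauchy--Schwarz in the remaining variables together with the Fock-space norm identities you listed. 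With this correction your plan goes through and produces exactly the paper's bounds.
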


Consequently, if
\begin{equation}
  \label{finalb}
  \langle \omega_t, (\mathcal{N} \otimes I) \omega_t \rangle \le D e^{\alpha t}, \quad \langle \omega_t, (I \otimes \mathcal{N}) \omega_t \rangle \le D e^{\alpha t}, \quad \langle \omega_t, (\mathcal{N} \otimes \mathcal{N}) \omega_t \rangle \le D e^{\alpha t}
\end{equation}
for positive constants $D$ and $\alpha$, we conclude that
\[
  \Big| \text{\rm Tr} J \, \big( \gamma_{t}^{(1,1)} - | u_t \otimes v_t \rangle \langle u_t \otimes v_t | \big) \Big| \le \| J \| C e^{\gamma t} \left[ \frac{1}{\sqrt{N_1}} + \frac{1}{\sqrt{N_2}} \right]
\]
for positive constants $C$ and $\gamma$.
This estimate implies the desired inequality, as we explained earlier.
Therefore, to finish the proof of the theorem, we need to prove the estimates in \eqref{finalb}.
To to this, we use the following proposition, which we prove later in Section \ref{s:proofofp:part2}.

\begin{prop}
  \label{p:part2}
  For any $\psi \in \mathcal{F} \otimes \mathcal{F}$ and $j \in \N$, we have
  \begin{multline*}
    \langle U(t,s) \psi, (\mathcal{N} \otimes I + I \otimes \mathcal{N})^j U(t,s) \psi \rangle \\
    \le C_j \big\| [ (\mathcal{N} + 1) \otimes I + I \otimes (\mathcal{N} + 1) ]^{j+1} \psi \big\|^2 e^{\gamma_j |t-s|}.
  \end{multline*}
  Here $C_j$ and $\gamma_j$ are positive constants that do not depend on $N_1$ and $N_2$.
\end{prop}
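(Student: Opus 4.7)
The plan is to run a Grönwall argument on the moment $f_j(t) = \langle U(t,s)\psi, (\mathcal{M}+1)^j U(t,s)\psi\rangle$, where $\mathcal{M} = \mathcal{N}\otimes I + I\otimes\mathcal{N}$. Differentiating and using the fluctuation equation $i\partial_t U(t,s) = \mathcal{L}(t)U(t,s)$ derived at the end of Section~2, one has $f_j'(t) = i\langle U(t,s)\psi, [\mathcal{L}(t),(\mathcal{M}+1)^j] U(t,s)\psi\rangle$. The decisive simplification is that $\mathcal{H}_{N_1,N_2}$ commutes with both $\mathcal{N}\otimes I$ and $I\otimes\mathcal{N}$ (as already observed in Section~2), so $[\mathcal{H}_{N_1,N_2},(\mathcal{M}+1)^j] = 0$, and the commutator reduces to $[C_{N_1,N_2}(t) + Q_{N_1,N_2}(t) + R_{N_1,N_2}(t), (\mathcal{M}+1)^j]$.

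Next, each monomial $A$ appearing in $C_{N_1,N_2}$, $Q_{N_1,N_2}$, $R_{N_1,N_2}$ has a definite effect on the particle number: it shifts $\mathcal{M}$ by some $k\in\{\pm 1,\pm 2\}$, so that $\mathcal{M}^j A = A(\mathcal{M}+k)^j$ and hence $[A,(\mathcal{M}+1)^j] = A\cdot P_{j-1}(\mathcal{M})$ for a polynomial $P_{j-1}$ of degree $j-1$. I would then estimate each resulting expectation $\langle U\psi, A\, P_{j-1}(\mathcal{M})\, U\psi\rangle$ by Cauchy--Schwarz, moving half of the $\mathcal{M}$-weights to each side, and applying Lemma~\ref{l:relbN} to convert each creation/annihilation operator into a factor of $(\mathcal{M}+1)^{1/2}$. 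Here I would use that the Hartree system \eqref{sysHartree} conserves $\|u_t\|_{L^2} = \|v_t\|_{L^2} = 1$ for all $t$, so the smearing functions (convolutions $V\ast|u_t|^2$, etc.) are uniformly controlled in the relevant norms using the hypothesis $V^2 \le K(1-\Delta)$. The prefactors are crucial: the cubic terms in $C_{N_1,N_2}$ come with $N_j^{-1/2}$ or $\sqrt{N_j}/(N_1+N_2)$, which exactly compensates the extra $(\mathcal{M}+1)^{1/2}$ produced by the third creation/annihilation operator; the quadratic terms in $Q_{N_1,N_2}$ produce only two such factors, yielding $O(1)$ contributions; and the linear $R_{N_1,N_2}(t)$ has coefficient $\sqrt{N_j}\cdot O(1/N_{3-j}) = O(N_{3-j}^{-1})\sqrt{N_j}$ by assumption \eqref{condseq}, producing a uniformly bounded contribution when paired with $(\mathcal{M}+1)^{1/2}$.

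Putting the bounds together produces a differential inequality of the form $|f_j'(t)| \le C_j \, f_j(t) + C_j'\, f_{j+1}(t)^{1/2} f_{j-1}(t)^{1/2}$, and Grönwall then yields the claimed exponential estimate, with the extra power $(\mathcal{M}+1)^{j+1}$ on the initial datum absorbing the higher-moment terms that appear when one propagates up the hierarchy (alternatively, one proves the statement by induction on $j$, using $f_{j+1}$ as an input from the inductive step). The main obstacle is the careful bookkeeping of the many terms in $C_{N_1,N_2}(t)$ and $Q_{N_1,N_2}(t)$: one must show that for \emph{every} term the $N_1,N_2$ prefactor is exactly strong enough to offset the number of creation/annihilation operators after using Lemma~\ref{l:relbN}. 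The pair-creation/annihilation quadratics in $Q_{N_1,N_2}(t)$ (terms like $b^*b^*$, $c^*c^*$, $b^*c^*$ and their adjoints) are the most delicate, since they raise the particle number by two with only an $O(1)$ prefactor, but they remain harmless because they produce only $(\mathcal{M}+1)^{1/2}\cdot(\mathcal{M}+1)^{1/2}\sim (\mathcal{M}+1)$ weights which close in the Grönwall inequality; once this is verified, the exponential bound follows.
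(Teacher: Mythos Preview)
Your direct Grönwall strategy has a genuine gap at the cubic terms, and it is exactly the obstacle that forces the paper to take a longer route. You write that the prefactor $N_j^{-1/2}$ in $C_{N_1,N_2}(t)$ ``exactly compensates the extra $(\mathcal{M}+1)^{1/2}$ produced by the third creation/annihilation operator''. But $(\mathcal{M}+1)^{1/2}$ here is an \emph{operator acting on $U(t,s)\psi$}, not a scalar of size $\sqrt{N_j}$; there is no a priori reason why $\mathcal{M}$ should be $O(N_j)$ on the evolved state --- that is precisely the content of the proposition you are trying to prove. Concretely, a term like $\tfrac{1}{\sqrt{N_1}}\int V_1(x-z)u_t(z)\,b_x^*b_z^*b_x$ contributes, after commuting with $(\mathcal{M}+1)^j$ and applying Lemma~\ref{l:relbN}, a bound of order $\tfrac{1}{\sqrt{N_1}}\,f_{j+1}(t)^{1/2}f_j(t)^{1/2}$. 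So your differential inequality couples $f_j$ to $f_{j+1}$, not to $f_{j-1}$, and the hierarchy runs \emph{upwards}: bounding $f_j$ requires $f_{j+1}$, which requires $f_{j+2}$, and so on. The factor $1/\sqrt{N_1}$ does not rescue this, because the only available a priori bound on $f_{j+1}$ (via the Weyl structure, as in Lemma~\ref{l:step2}) is of order $(N_1+N_2)^{j+1}$, and feeding that back gives an $N$-dependent Grönwall constant.

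The paper circumvents this by introducing a truncated fluctuation dynamics $U^c(t,s)$ in which the cubic terms carry a cutoff $\chi(\mathcal{N}\le M_k)$; this replaces the rogue $(\mathcal{M}+1)^{1/2}$ by $\sqrt{M_k}$, yielding a clean Grönwall bound for $U^c$ with rate $1+\sqrt{M_1/N_1}+\sqrt{M_2/N_2}$ (Lemma~\ref{l:step1}). Separately, crude $N$-dependent a priori bounds on $U$ are obtained directly from the Weyl conjugation structure (Lemma~\ref{l:step2}), and these are used to control the difference $U-U^c$ (Lemma~\ref{l:step3}), where the cutoff produces inverse powers of $M_k$ that beat the $N$-growth. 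Choosing $M_k=N_k$ then closes the argument. Your treatment of $\mathcal{H}_{N_1,N_2}$, $Q_{N_1,N_2}$, and $R_{N_1,N_2}$ is fine; the missing idea is this truncation-and-comparison mechanism for the cubic part.
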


Let us prove the estimates in \eqref{finalb}.
We start with two observations:
First, since $\omega_t = e^{-i \alpha(t)} U(t,0) \omega$, we have
\[
  \langle \omega_t, (\mathcal{N} \otimes I) \omega_t \rangle = \langle U(t,0) \omega, (\mathcal{N} \otimes I) U(t,0) \omega \rangle,
\]
and we have a similar equality for the other two quantities in \eqref{finalb}.
Secondly, we have $\mathcal{N} \Omega = 0$.
We now apply Proposition \ref{p:part2} with $\psi = \omega$, $s = 0$, $t > 0$, and $j = 1, 2$.
We obtain
\begin{gather*}
  \langle U(t,0) \omega, (\mathcal{N} \otimes I + I \otimes \mathcal{N}) U(t,0) \omega \rangle \le C_1 e^{\gamma_1 t}, \\
  \langle U(t,0) \omega, (\mathcal{N}^2 \otimes I + 2 \mathcal{N} \otimes \mathcal{N} + I \otimes \mathcal{N}^2) U(t,0) \omega \rangle \le C_2 e^{\gamma_2 t}.
\end{gather*}
On the left hand side of these inequalities, each term is a non-negative number.
Using these observations, we obtain the estimates in \eqref{finalb} for suitable constants $D$ and $\alpha$.
This completes the proof of Theorem~\ref{t:coherent}.
\qed

\section{Proof of Proposition \ref{p:part1}}
\label{s:proofofp:part1}

By \eqref{expgamma}, we have
\begin{equation}
  \label{sumofrj}
  \text{\rm Tr} J \big( \gamma_{t}^{(1,1)} - | u_t \otimes v_t \rangle \langle u_t \otimes v_t | \big) = \sum_{j=1}^{15} \int dx' dy' dx dy \, J(x',y';x,y) r_j(x,y;x',y').
\end{equation}
We want to estimate the absolute value of each term in this sum.
Recall that $\| u_t \| = 1$ and $\| v_t \| = 1$.
Using Cauchy-Schwarz inequality and the fact that $J$ is a bounded operator, we obtain
\[
  \begin{split}
    & \Big| \int dx' dy' dx dy \, J(x',y';x,y) r_1(x,y;x',y') \Big| \\
    % & = \Big| \int dx' dy' dx dy \, J(x',y';x,y) \overline{v_t}(y') u_t(x) v_t(y) \langle \omega_t, b_{x'}^* \omega_r \rangle \Big| \\
    & = N_1^{-1/2} \Big| \int dx' dy' \Big\langle v_t(y') b_{x'} \omega_t, \int dx dy \, J(x',y';x,y) u_t(x) v_t(y) \omega_t \Big\rangle \Big| \\
    & \le N_1^{-1/2} \int dx' dy' \| v_t(y') b_{x'} \omega_t \| \, \Big\| \int dx dy \, J(x',y';x,y) u_t(x) v_t(y) \omega_t \Big\| \\
    & \le N_1^{-1/2} \Big( \int dx' dy' |v_t(y')|^2 \| b_{x'} \omega_t \|^2 \Big)^{1/2} \\
    & \quad \times \Big( \int dx' dy' \Big\| \int dx dy \, J(x',y';x,y) u_t(x) v_t(y) \omega_t \Big\|^2 \Big)^{1/2} \\
    & \le N_1^{-1/2} \| J \| \| u_t \| \| v_t \|^2 \Big( \int dx' \| b_{x'} \omega_t \|^2 \Big)^{1/2} \\
    & = \| J \| N_1^{-1/2} \langle \omega_t, (\mathcal{N} \otimes I) \omega_t \rangle^{1/2}.
  \end{split}
\]
The same contribution to \eqref{sumofrj} arises from $r_3$ (the calculation is similar).
Hence the factor $2$ in $p_1$.

Similarly, the contribution arising from $r_2$ and $r_4$ is $\| J \| N_2^{-1/2} \langle \omega_t, (I \otimes \mathcal{N}) \omega_t \rangle^{1/2}$, which gives $p_2$.

The calculations to derive $p_3, \dots, p_9$ are similar and use the same ingredients: Cauchy-Schwarz inequality and the fact that $J$ is a bounded operator.
In the bound for \eqref{sumofrj}, the contribution $p_3$ arises from $r_5$ and $r_{10}$, the contribution $p_4$ arises from $r_7$ and $r_8$, the contribution $p_5$ arises from $r_6$, the contribution $p_6$ arises from $r_9$, the contribution $p_7$ arises from $r_{11}$ and $r_{13}$, the contribution $p_8$ arises from $r_{12}$ and $r_{14}$, and the contribution $p_9$ arises from $r_{15}$.
We omit the details.
This proves Proposition \ref{p:part1}.
\qed

\section{Proof of Proposition \ref{p:part2}}
\label{s:proofofp:part2}

To prove Proposition \ref{p:part2}, we will use another dynamics, denoted by $U^c(t,s)$, whose generator is similar to $\mathcal{L}(t)$ but contains a cutoff in the terms that are cubic in creation or annihilation operators.
Given positive constants $M_1$ and $M_2$, the cutoff forces the number of particles in the first factor of $\mathcal{F} \otimes \mathcal{F}$ to be smaller than $M_1$ and in the second factor to be smaller than $M_2$.
We refer to $U^c(t,s)$ as the truncated fluctuation dynamics, and we denote its generator by $\mathcal{L}^c(t)$.

\textbf{Truncated fluctuation dynamics.}
We define
\[
  \mathcal{L}^c(t) = \mathcal{H}_{N_1,N_2} + C_{N_1,N_2}^c(t) + Q_{N_1,N_2}(t) + R_{N_1,N_2}(t)
\]
with $\mathcal{H}_{N_1,N_2}$, $Q_{N_1,N_2}(t)$ and $R_{N_1,N_2}(t)$ as above and
\[
  \begin{split}
    C_{N_1,N_2}^c(t) & = \frac{1}{\sqrt{N_1}} \int dx dz \, V_1(x-z) b_x^* (u_t(z) \chi_1 b_z^* + \overline{u_t}(z) b_z \chi_1) b_x \\
    & \quad + \frac{1}{\sqrt{N_2}} \int dy dz \, V_2(y-z) c_y^* (v_t(z) \chi_2 c_z^* + \overline{v_t}(z) c_z \chi_2) c_y \\
    & \quad + \frac{\sqrt{N_1}}{N_1 + N_2} \int dx dy \, V_{12}(x-y) c_y^* ( u_t(x) \chi_1 b_x^* + \overline{u_t}(x) b_x \chi_1 ) c_y \\
    & \quad + \frac{\sqrt{N_2}}{N_1 + N_2} \int dx dy \, V_{12}(x-y) b_x^* ( v_t(y) \chi_2 c_y^* + \overline{v_t}(y) c_y \chi_2 ) b_x.
  \end{split}
\]
Here $\chi_1 = \chi(\mathcal{N} \le M_1) \otimes I$ and $\chi_2 = I \otimes \chi(\mathcal{N} \le M_2)$, where $\chi$ is a characteristic function.

For each $s \in \R$, let $U_{t,s}^c = U^c(t,s)$ be the time evolution defined by the equation
\[
  i \partial_t U_{t,s}^c = \mathcal{L}(t) U_{t,s}^c \qquad \text{with} \qquad U^c_{s,s} = I.
\]

The first ingredient to prove Proposition \ref{p:part2} is the following lemma:

\begin{lemma}
  \label{l:step1}
  For $\psi \in \mathcal{F} \otimes \mathcal{F}$ and $t,s \in \R$, we have
  \[
    \begin{split}
      \langle U^c_{t,s} \psi, ( (\mathcal{N} + 1) \otimes I & + I \otimes (\mathcal{N} + 1))^j U^c_{t,s} \psi \rangle \\
      & \le \langle \psi, ( (\mathcal{N} + 1) \otimes I + I \otimes (\mathcal{N} + 1) )^j \psi \rangle \\
      & \quad \times \exp \big( C_j (1 + \sqrt{M_1/N_1} + \sqrt{M_2/N_2}) |t-s| \big).
    \end{split}
  \]
  Here $C_j$ is a positive constant that does not depend on $N_1$ and $N_2$.
\end{lemma}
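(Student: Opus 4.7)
Set $\mathcal{N}_+ := (\mathcal{N}+1)\otimes I + I\otimes(\mathcal{N}+1)$ and $\phi_t := U^c_{t,s}\psi$. Differentiating in $t$ and using self-adjointness of $\mathcal{L}^c(t)$ gives
\[
\frac{d}{dt}\langle \phi_t, \mathcal{N}_+^{\,j}\phi_t\rangle \;=\; i\,\langle \phi_t, [\mathcal{L}^c(t), \mathcal{N}_+^{\,j}]\,\phi_t\rangle .
\]
I plan to prove the pointwise bound
\[
\big|\langle \phi_t, [\mathcal{L}^c(t), \mathcal{N}_+^{\,j}]\phi_t\rangle\big| \;\le\; C_j\big(1 + \sqrt{M_1/N_1} + \sqrt{M_2/N_2}\big)\,\langle \phi_t, \mathcal{N}_+^{\,j} \phi_t\rangle,
\]
whereupon Gr\"onwall's inequality yields the lemma. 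Using the decomposition $\mathcal{L}^c(t) = \mathcal{H}_{N_1,N_2} + C^c_{N_1,N_2}(t) + Q_{N_1,N_2}(t) + R_{N_1,N_2}(t)$, each summand is analysed separately.

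The piece $\mathcal{H}_{N_1,N_2}$ conserves the particle number in each component (noted just after the definition of $\mathcal{H}_{N_1,N_2}$), hence commutes with every power of $\mathcal{N}_+$ and contributes nothing. Each of $C^c_{N_1,N_2}(t)$, $Q_{N_1,N_2}(t)$, $R_{N_1,N_2}(t)$ is a finite sum of monomials $A$ in $b_x, b_x^*, c_y, c_y^*$ integrated against kernels built from the potentials $V_1,V_2,V_{12}$, the Hartree profiles $u_t,v_t$, and (for $C^c$) the cutoffs $\chi_1,\chi_2$. Every such $A$ shifts the particle number in each component by a fixed integer $|k|\le 2$, so it satisfies the pull-through identity $A\,\mathcal{N}_+^{\,j} = (\mathcal{N}_+ + k)^{\,j}A$, yielding
\[
[\mathcal{N}_+^{\,j}, A] \;=\; -A\,\sum_{\ell=0}^{j-1}\binom{j}{\ell} k^{\,j-\ell}\,\mathcal{N}_+^{\,\ell}.
\]
Each term $\langle \phi_t, A\,\mathcal{N}_+^{\,\ell}\phi_t\rangle$ will be estimated by Cauchy--Schwarz with $\mathcal{N}_+^{(j-1)/2}$ distributed symmetrically on the two sides, then applying Lemma \ref{l:relbN} to absorb each remaining creation/annihilation factor into one more power of $\mathcal{N}_+^{1/2}$. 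The resulting kernel norms are controlled via the operator inequality $V^2\le K(1-\Delta)$ together with $H^1$ bounds on $u_t, v_t$ that hold uniformly in $t$ by conservation of mass and energy for \eqref{sysHartree}.

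For $Q_{N_1,N_2}(t)$ this procedure produces an $O(1)$ constant. For $R_{N_1,N_2}(t)$ the explicit prefactor satisfies $|\sqrt{N_p}(N_{3-p}/(N_1+N_2) - c_{3-p})| \le D/\sqrt{N_p}$ by hypothesis \eqref{condseq}, which is again $O(1)$. The factors $\sqrt{M_p/N_p}$ come entirely from $C^c_{N_1,N_2}(t)$: each of its monomials carries the prefactor $1/\sqrt{N_p}$ and contains exactly one factor of the form $\chi_p b_z^*$, $\chi_p c_z^*$, $b_z\chi_p$, or $c_z\chi_p$, whose action on any state is bounded by $\sqrt{M_p+1}$ times an $L^2$ kernel norm (after the harmless shift $\chi(\mathcal{N}\le M_p)\,b^* = b^*\chi(\mathcal{N}\le M_p+1)$ and similarly for the annihilation case); the product $N_p^{-1/2}\sqrt{M_p+1}$ is the advertised $\sqrt{M_p/N_p}$. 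The main technical obstacle is the bookkeeping for $C^c_{N_1,N_2}(t)$: one must pair the cutoff-truncated factor with the two remaining creation/annihilation operators via a Cauchy--Schwarz split that simultaneously (i) distributes the $\mathcal{N}_+^{(j-1)/2}$ weights symmetrically, (ii) leaves room for the $\sqrt{M_p}$ factor coming from $\chi_p$, and (iii) permits the application of $V^2\le K(1-\Delta)$ to absorb the potential. Once this is done uniformly for each of the several monomial types in $C^c$, summing the estimates and applying Gr\"onwall completes the proof.
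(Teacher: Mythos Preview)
Your proposal is correct and follows essentially the same strategy as the paper: differentiate $\langle U^c_{t,s}\psi,\mathcal{X}\,U^c_{t,s}\psi\rangle$, note that $\mathcal{H}_{N_1,N_2}$ commutes with $\mathcal{X}$, bound the commutators of $Q_{N_1,N_2}$, $R_{N_1,N_2}$, $C^c_{N_1,N_2}$ with $\mathcal{X}$ term by term using Cauchy--Schwarz, Lemma~\ref{l:relbN}, and the hypothesis $V^2\le K(1-\Delta)$ (the cutoff $\chi_p$ supplying the factor $\sqrt{M_p}$ exactly as you describe), and conclude by Gr\"onwall. The only organisational difference is that the paper first expands $\mathcal{N}_+^{\,j}$ binomially into the tensor pieces $\mathcal{X}_k=(\mathcal{N}+1)^{j-k}\otimes(\mathcal{N}+1)^k$ and computes commutators with each $\mathcal{X}_k$ separately (via explicit formulae for $[b_x^*b_z^*,\mathcal{X}_k]$, $[b_x^*c_y^*,\mathcal{X}_k]$, etc.), whereas you work directly with $\mathcal{N}_+^{\,j}$ through the pull-through identity; your bookkeeping is slightly leaner, since for instance the number-preserving monomial $b_x^*c_y$ in $Q_{N_1,N_2}$ commutes with $\mathcal{N}_+^{\,j}$ and drops out automatically, while in the paper's decomposition it produces nonzero $q_{4,k}$'s that must be estimated (or recognised to cancel upon summation in $k$).
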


Before we prove this lemma, let us state and prove another lemma that will be used several times:

\begin{lemma}
  \label{l:Z}
  Suppose that $(N_1)$ and $(N_2)$ are sequences of positive integers such that $N_1 \to \infty$ and $N_2 \to \infty$ with
  \[
    \left| \frac{N_1}{N_1 + N_2} - c_1 \right| \le \frac{1}{N_2} \qquad \text{and} \qquad \left| \frac{N_2}{N_1 + N_2} - c_2 \right| \le \frac{1}{N_1}
  \]
  where $c_1 \geq 0$ and $c_2 \geq 0$ with $c_1 + c_2 = 1$.
  For $N_1$ and $N_2$ sufficiently large, there exists a constant $C > 0$ such that
  \[
    \frac{N_1+N_2}{N_1} < C, \qquad \frac{N_1+N_2}{N_2} < C, \qquad \frac{N_1}{N_2} < C, \qquad \frac{N_2}{N_1} < C, \qquad \frac{\sqrt{N_1} \sqrt{N_2}}{N_1+N_2} < C.
  \]
\end{lemma}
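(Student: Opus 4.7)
The plan is to first extract strict positivity of both limit constants $c_1,c_2$ from the hypotheses, and then derive the five bounds by elementary manipulations.

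First I would argue that $c_1>0$ and $c_2>0$. The key observation is that if, say, $c_1=0$, then the first hypothesis forces $N_1/(N_1+N_2)\le 1/N_2$, i.e.\ $N_1 N_2\le N_1+N_2$. A short manipulation (dividing by $N_1$ and by $N_2$, and using $N_1,N_2\to\infty$) shows that this cannot hold eventually. The symmetric argument gives $c_2>0$, so $c_1,c_2\in(0,1)$ and in particular $1/c_1,1/c_2<\infty$.

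Once strict positivity is in hand, the first two bounds follow immediately: since $N_1/(N_1+N_2)\to c_1>0$, the reciprocal $(N_1+N_2)/N_1$ converges to $1/c_1$ and is hence bounded above for all $N_1,N_2$ sufficiently large, and similarly $(N_1+N_2)/N_2\to 1/c_2$ is bounded. The two cross-ratio bounds are then obtained by factoring
\[
\frac{N_1}{N_2}\;=\;\frac{N_1}{N_1+N_2}\cdot\frac{N_1+N_2}{N_2},
\]
a product of two sequences already shown to be bounded; the bound on $N_2/N_1$ is symmetric. The fifth inequality is immediate from the AM--GM inequality, which gives $\sqrt{N_1 N_2}/(N_1+N_2)\le 1/2$ with no hypothesis needed at all. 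Taking $C$ to exceed the maximum of these five finite upper quantities yields the lemma.

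The only genuinely substantive step is the positivity of $c_1$ and $c_2$; this is where the joint assumption that both $N_1$ and $N_2$ tend to infinity is really used, since the rate estimates alone are otherwise compatible with either constant being zero. Everything else is a routine rearrangement, and I do not foresee any real obstacle.
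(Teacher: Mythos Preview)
Your proposal is correct and follows essentially the same route as the paper: establish that $c_1,c_2>0$, conclude boundedness of $(N_1+N_2)/N_k$ from convergence of $N_k/(N_1+N_2)$, and then handle the remaining three ratios by elementary algebra. The paper simply asserts $c_k\neq 0$ without justification and uses slightly different (and somewhat more convoluted) identities for the last three bounds, so your version is in fact a bit more complete and cleaner---in particular your use of AM--GM for $\sqrt{N_1N_2}/(N_1+N_2)\le 1/2$ is simpler than the paper's decomposition.
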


\begin{proof}
  For $k = 1, 2$, consider the sequence $a_k = N_k/(N_1+N_2)$.
  Since $a_k$ converges to $c_k \neq 0$, the sequence $1/a_k$ converges to $1/c_k$.
  Thus $1/a_k$ is bounded.
  This proves the first two estimates.
  The other bounds follow from the identities
  \begin{gather*}
    \frac{N_1}{N_2} = \bigg[ \frac{N_1}{N_1+N_2} - c_1 \bigg] \frac{N_1+N_2}{N_2} + c_1 \frac{N_1+N_2}{N_2}, \\
    \frac{\sqrt{N_1}\sqrt{N_2}}{N_1+N_2} = \sqrt{\frac{N_1}{N_2}} \bigg[ \frac{N_2}{N_1+N_2} - c_2 \bigg] + c_2 \sqrt{\frac{N_1}{N_2}}.
  \end{gather*}
\end{proof}

\begin{proof}[Proof of Lemma \ref{l:step1}]
  We fix $j$ and set $\mathcal{X} = ((\mathcal{N} + 1) \otimes I + I \otimes (\mathcal{N} + 1))^j$.
  We will estimate the time derivative of the expected value of $\mathcal{X}$ and then use Gronwall's Lemma.
  Define $\mathcal{X}_k = (\mathcal{N} + 1)^{j-k} \otimes (\mathcal{N} + 1)^k$.
  Then $\mathcal{X} = \sum_{k=0}^j \binom{j}{k} \mathcal{X}_k$.
  Calculating, we obtain
  \[
    \frac{d}{dt} \langle U^c_{t,s} \psi, \mathcal{X} U^c_{t,s} \psi \rangle = \sum_{k=0}^j \binom{j}{k} i \langle U^c_{t,s} \psi, [ \mathcal{L}^c(t), \mathcal{X}_k ] U^c_{t,s} \psi \rangle = \sum_{k=0}^j \binom{j}{k} \sum_{l=1}^{10} q_{l,k}
  \]
  where
  \begin{align*}
    q_{1,k} & = - \Im \int dx dz \, V_1(x-z) u_t(z) u_t(x) \langle U^c_{t,s} \psi, [b_x^* b_z^*, \mathcal{X}_k] U^c_{t,s} \psi \rangle, \\
    q_{2,k} & = - \Im \int dy dz \, V_2(y-z) v_t(z) v_t(y) \langle U^c_{t,s} \psi, [c_y^* c_z^*, \mathcal{X}_k] U^c_{t,s} \psi \rangle, \\
    q_{3,k} & = - \frac{\sqrt{N_1} \sqrt{N_2}}{N_1 + N_2} \, 2 \Im \!\! \int \! dx dy V_{12}(x-y) u_t(x) v_t(y) \langle U^c_{t,s} \psi, [ b_x^* c_y^*, \mathcal{X}_k ] U^c_{t,s} \psi \rangle, \\
    q_{4,k} & = - \frac{\sqrt{N_1} \sqrt{N_2}}{N_1 + N_2} \, 2 \Im \!\! \int \! dx dy V_{12}(x-y) u_t(x) \overline{v_t}(y) \langle U^c_{t,s} \psi, [ b_x^* c_y, \mathcal{X}_k ] U^c_{t,s} \psi \rangle, \\
    q_{5,k} & = - \frac{1}{\sqrt{N_1}} \, 2 \Im \int dx dz \, V_1(x-z) \overline{u_t}(z) \langle U^c_{t,s} \psi, [ b_x^* b_z \chi_1 b_x, \mathcal{X}_k ] U^c_{t,s} \psi \rangle, \\
    q_{6,k} & = - \frac{1}{\sqrt{N_2}} \, 2 \Im \int dy dz \, V_2(y-z) \overline{v_t}(z) \langle U^c_{t,s} \psi, [ c_y^* c_z \chi_2 c_y, \mathcal{X}_k ] U^c_{t,s} \psi \rangle, \\
    q_{7,k} & = - \frac{1}{\sqrt{N_1}} \, 2 \Im\int dx dy \, V_{12}(x-y) \overline{u_t}(x) \langle U^c_{t,s} \psi, [ c_y^* b_x \chi_1 c_y, \mathcal{X}_k ] U^c_{t,s} \psi \rangle, \\
    q_{8,k} & = - \frac{1}{\sqrt{N_2}} \, 2 \Im\int dx dy \, V_{12}(x-y) \overline{v_t}(y) \langle U^c_{t,s} \psi, [ b_x^* c_y \chi_2 b_x, \mathcal{X}_k ] U^c_{t,s} \psi \rangle, \\
    q_{9,k} & = - \sqrt{N_1} \left( \frac{N_2}{N_1 + N_2} - c_2 \right) \\
    & \qquad \times 2 \Im \int dx \, (V_{12} * |v_t|^2)(x) u_t(x) \langle U^c_{t,s} \psi, [ b_x^*, \mathcal{X}_k ] U^c_{t,s} \psi \rangle, \\
    q_{10,k} & = - \sqrt{N_2} \left( \frac{N_1}{N_1 + N_2} - c_1 \right) \\
    & \qquad \times 2 \Im \int dy \, (V_{12} * |u_t|^2)(y) v_t(y) \langle U^c_{t,s} \psi, [ c_y^*, \mathcal{X}_k ] U^c_{t,s} \psi \rangle.
  \end{align*}
  We observe that $q_{l,k} = q_{l,k}(t)$ for $l = 1, \dots, 10$.
  For each $l$, we will prove that
  \begin{equation}
    \label{leftto1}
    \sum_{k=0}^j \binom{j}{k} q_{l,k} \le C'_j (1 + \sqrt{M_1/N_1} + \sqrt{M_2/N_2}) \langle U^c_{t,s} \psi, \mathcal{X} U^c_{t,s} \psi \rangle.
  \end{equation}
  We then obtain
  \[
    \frac{d}{dt} \langle U^c_{t,s} \psi, \mathcal{X} U^c_{t,s} \psi \rangle \le C_j (1 + \sqrt{M_1/N_1} + \sqrt{M_2/N_2}) \langle U^c_{t,s} \psi, \mathcal{X} U^c_{t,s} \psi \rangle.
  \]
  By Gronwall's Lemma, this implies the desired estimate.

  We are left to prove the inequality \eqref{leftto1}.
  We need to calculate the commutators in the expressions for $q_{l,k}$.
  First, using $a_x \mathcal{N} = (\mathcal{N} + 1) a_x$ and $a_x^* \mathcal{N} = (\mathcal{N} - 1) a_x^*$, we obtain
  \begin{align*}
    [a_x^*, (\mathcal{N} + 1)^p] & = \sum_{q=0}^{p-1} \binom{p}{q} (-1)^q (\mathcal{N} + 1)^q a_x^*, \\
    [a_x, (\mathcal{N} + 1)^p] & = \sum_{q=0}^{p-1} \binom{p}{q} (\mathcal{N} + 1)^q a_x.
  \end{align*}
  Consequently, we get
  \[
    \begin{split}
      & [a_x^* a_y^*, (\mathcal{N} + 1)^p] \\
      & = \sum_{q=0}^{p-1} \binom{p}{q} (-1)^q \big( a_x^* (\mathcal{N} + 1)^q a_y^* + (\mathcal{N} + 1)^q a_x^* a_y^* \big) \\
      & = \sum_{q=0}^{p-1} \binom{p}{q} (-1)^q \big( \mathcal{N}^{q/2} a_x^* a_y^* (\mathcal{N} + 2)^{q/2} + (\mathcal{N} + 1)^{q/2} a_x^* a_y^* (\mathcal{N} + 3)^{q/2} \big)
    \end{split}
  \]
  and
  \[
    [a_x, (\mathcal{N} + 1)^p] = \sum_{q=0}^{p-1} \binom{p}{q} (\mathcal{N} + 1)^{q/2} a_x \mathcal{N}^{q/2}.
  \]
  Using the above formulae, we find that
  \begin{equation}
    \label{commB}
    [b_x^*, (\mathcal{N} + 1)^p \otimes (\mathcal{N} + 1)^r] = \sum_{q=0}^{p-1} \binom{p}{q} (-1)^q (\mathcal{N} + 1)^q \otimes (\mathcal{N} + 1)^r b_x^*,
  \end{equation}
  \begin{equation}
    [b_x, (\mathcal{N} + 1)^p \otimes (\mathcal{N} + 1)^r] = \sum_{q=0}^{p-1} \binom{p}{q} (\mathcal{N} + 1)^q \otimes (\mathcal{N} + 1)^r b_x,
  \end{equation}
  \begin{equation}
    \begin{split}
      & [b_x^* b_z^*, (\mathcal{N} + 1)^p \otimes (\mathcal{N} + 1)^r] \\
      & = \sum_{l=0}^{p-1} \binom{p}{l} (-1)^l \mathcal{N}^{l/2} \otimes (\mathcal{N} + 1)^{r/2} b_x^* b_z^* (\mathcal{N} + 2)^{l/2} \otimes (\mathcal{N} + 1)^{r/2} \\
      & \quad + \sum_{l=0}^{p-1} \binom{p}{l} (-1)^l (\mathcal{N} + 1)^{l/2} \otimes (\mathcal{N} + 1)^{r/2} b_x^* b_z^* (\mathcal{N} + 3)^{l/2} \otimes (\mathcal{N} + 1)^{r/2},
    \end{split}
  \end{equation}
  \begin{equation}
    \label{commBC}
    \begin{split}
      & [b_x^* c_y^*, (\mathcal{N} + 1)^p \otimes (\mathcal{N} + 1)^r ] \\
      & = \sum_{l=0}^{p-1} \binom{p}{l} (-1)^l (\mathcal{N} + 1)^{l/2} \otimes (\mathcal{N} + 1)^{r/2} b_x^* c_y^* (\mathcal{N} + 2)^{l/2} \otimes (\mathcal{N} + 2)^{r/2} \\
      & \quad + \sum_{l=0}^{r-1} \binom{r}{l} (-1)^l \mathcal{N}^{p/2} \otimes (\mathcal{N} + 1)^{l/2} b_x^* c_y^* (\mathcal{N} + 1)^{p/2} \otimes (\mathcal{N} + 2)^{l/2}.
    \end{split}
  \end{equation}
  We have similar expressions for the commutators
  \begin{equation}
    \label{commC}
    \begin{alignedat}{2}
      & [c_y^*, (\mathcal{N} + 1)^p \otimes (\mathcal{N} + 1)^r], & & \qquad [c_y, (\mathcal{N} + 1)^p \otimes (\mathcal{N} + 1)^r], \\
      & [c_y^* c_z^*, (\mathcal{N} + 1)^p \otimes (\mathcal{N} + 1)^r], & & \qquad [b_x^* c_y^*, (\mathcal{N} + 1)^p \otimes (\mathcal{N} + 1)^r ].
    \end{alignedat}
  \end{equation}

  We can now prove \eqref{leftto1}.
  We want to estimate
  \[
    \sum_{k=0}^j \binom{j}{k} q_{l,k}
  \]
  for $l = 1, \dots, 10$.
  By symmetry with respect to exchanging operators $b^{(\cdot)}_{(\cdot)}$ and $c^{(\cdot)}_{(\cdot)}$, there are only six types of terms to consider, namely, the terms corresponding to $l$ belonging to the sets $\setl{1,2}$, $\setl{3}$, $\setl{4}$, $\setl{5,6}$, $\setl{7,8}$ and $\setl{9,10}$, respectively.

  Let us consider $l = 3$.
  Write
  \[
    g_3(t) = \Im \int dx dy \, V_{12}(x-y) u_t(x) v_t(y) \langle U^c_{t,s} \psi, [ b_x^* c_y^*, \mathcal{X}_k ] U^c_{t,s} \psi \rangle.
  \]
  Using Lemma \ref{l:Z}, for $N_1$ and $N_2$ sufficiently large, we obtain
  \[
    \left| \sum_{k=0}^j \binom{j}{k} q_{3,k} \right| = \left| -2 \frac{\sqrt{N_1} \sqrt{N_2}}{N_1 + N_2} \sum_{k=0}^j \binom{j}{k} g_3(t) \right| \le C \sum_{k=0}^j \binom{j}{k} |g_3(t)|.
  \]
  Using \eqref{commBC}, we get
  \[
    \begin{split}
      g_3(t) & = \sum_{l=0}^{j-k-1} \binom{j-k}{l} (-1)^{l+1} \Im \int dx \, u_t(x) \big \langle b_x (\mathcal{N}+1)^{l/2} \otimes (\mathcal{N}+1)^{k/2} U^c_{t,s} \psi, \\
      & \qquad \qquad \qquad \qquad c^*(V_{12}(x-\cdot) v_t(\cdot)) \big( (\mathcal{N}+1)^{l/2} \otimes (\mathcal{N}+1)^{k/2} \big) U^c_{t,s} \psi \big \rangle \\
      & \quad + \sum_{l=0}^{k-1} \binom{k}{l} (-1)^{l+1} (-1)^{l+1} \Im \int dx \, u_t(x) \big \langle b_x \mathcal{N}^{(j-k)/2} \otimes (\mathcal{N}+1)^{l/2} U^c_{t,s} \psi, \\
      & \qquad \qquad \qquad \qquad c^*(V_{12}(x-\cdot) v_t(\cdot)) \big( (\mathcal{N}+1)^{(j-k)/2} \otimes (\mathcal{N}+2)^{l/2} \big) U^c_{t,s} \psi \big \rangle.
    \end{split}
  \]
  Using the Cauchy-Schwarz inequality and Lemma \ref{l:relbN}, we find
  \[
    \begin{split}
      |g_3(t)| & \le \sum_{l=0}^{j-k-1} \binom{j-k}{l} \int dx \, |u_t(x)| \| b_x (\mathcal{N}+1)^{l/2} \otimes (\mathcal{N}+1)^{k/2} U^c_{t,s} \psi \| \\
      & \qquad \qquad \times \| c^*(V_{12}(x-\cdot) v_t(\cdot)) (\mathcal{N}+1)^{l/2} \otimes (\mathcal{N}+1)^{k/2} \, U^c_{t,s} \psi \| \\
      & \quad + \sum_{l=0}^{k-1} \binom{k}{l} \int dx \, |u_t(x)| \| b_x \mathcal{N}^{(j-k)/2} \otimes (\mathcal{N}+1)^{l/2} U^c_{t,s} \psi \| \\
      & \qquad \qquad \times \| c^*(V_{12}(x-\cdot) v_t(\cdot)) (\mathcal{N}+1)^{(j-k)/2} \otimes (\mathcal{N}+2)^{l/2} U^c_{t,s} \psi \| \\
      & \le \sum_{l=0}^{j-k-1} \binom{j-k}{l} \left( \int dx \, \| b_x (\mathcal{N}+1)^{l/2} \otimes (\mathcal{N}+1)^{k/2} U^c_{t,s} \psi \|^2 \right)^{1/2} \\
      & \qquad \qquad \times C^{1/2} \| (\mathcal{N}+1)^{l/2} \otimes (\mathcal{N}+1)^{(k+1)/2} \, U^c_{t,s} \psi \| \\
      & \quad + \sum_{l=0}^{k-1} \binom{k}{l} \left( \int dx \, \| b_x \mathcal{N}^{(j-k)/2} \otimes (\mathcal{N}+1)^{l/2} U^c_{t,s} \psi \|^2 \right)^{1/2} \\
      & \qquad \qquad \times C^{1/2} \| (\mathcal{N}+1)^{(j-k)/2} \otimes (\mathcal{N}+2)^{(l+1)/2} U^c_{t,s} \psi \|.
    \end{split}
  \]
  Here, we have used that
  \[
    \int dx \, \| V_{12}(x - \cdot) v_t(\cdot) \|^2 \, |u_t(x)|^2 \le K \int dy \, |v_t(y)|^2 \int dz \, (|u_t(z)|^2 + |\nabla u_t(z)|^2 ) \le C.
  \]
  This estimate follows from the hypothesis on $V_{12}$ (using integration by parts).
  Now, using the Cauchy-Schwarz inequality again and rewriting, we obtain
  \[
    \begin{split}
      |g_3(t)| & \le C_{j-k} (1-\delta_{kj}) \big( \langle U^c_{t,s} \psi, (\mathcal{N}+1)^{j-k} \otimes (\mathcal{N}+1)^k U^c_{t,s} \psi \rangle \\
      & \qquad \qquad \qquad + \langle U^c_{t,s} \psi, (\mathcal{N}+1)^{j-k-1} \otimes (\mathcal{N}+1)^{k+1} \, U^c_{t,s} \psi \rangle \big) \\
      & \quad + C_k (1-\delta_{k0}) \big( \langle U^c_{t,s} \psi, (\mathcal{N}+1)^{j-k+1} \otimes (\mathcal{N}+1)^{k-1} U^c_{t,s} \psi \rangle \\
      & \qquad \qquad \qquad + \langle U^c_{t,s} \psi, (\mathcal{N}+1)^{j-k} \otimes (\mathcal{N}+1)^k U^c_{t,s} \psi \rangle \big)
    \end{split}
  \]
  for some constants $C_{j-k}$ and $C_k$, where $\delta_{jk}$ denotes the Kronecker delta.
  Finally, we arrive at
  \[
    \begin{split}
      \sum_{k=0}^j \binom{j}{k} |g_3(t)| & \le C_j' \sum_{k=0}^j \binom{j}{k} \langle U^c_{t,s} \psi, (\mathcal{N}+1)^{j-k} \otimes (\mathcal{N}+1)^k U^c_{t,s} \psi \rangle \\
      & = C_j' \langle U^c_{t,s} \psi, \big( (\mathcal{N}+1) \otimes I + I \otimes (\mathcal{N}+1) \big)^j U^c_{t,s} \psi \rangle \\
      & = C_j' \langle U^c_{t,s} \psi, \mathcal{X} U^c_{t,s} \psi \rangle.
    \end{split}
  \]
  This upper bound contributes to the first term on the right hand side of \eqref{leftto1}.

  Let us consider $l \in \{1, 2\}$ and $l=4$.
  The calculations in these cases are very similar to the calculations in the case $l=3$.
  Write
  \begin{align*}
    g_1(t) & = \Im \int dx dz \, V_{1}(x-z) u_t(z) u_t(x) \langle U^c_{t,s} \psi, [ b_x^* b_z^*, \mathcal{X}_k ] U^c_{t,s} \psi \rangle, \\
    g_2(t) & = \Im \int dy dz \, V_{2}(y-z) v_t(z) v_t(y) \langle U^c_{t,s} \psi, [ c_y^* c_z^*, \mathcal{X}_k ] U^c_{t,s} \psi \rangle, \\
    g_4(t) & = \Im \int dx dy \, V_{12}(x-y) u_t(x) \overline{v_t}(y) \langle U^c_{t,s} \psi, [ b_x^* c_y, \mathcal{X}_k ] U^c_{t,s} \psi \rangle.
  \end{align*}
  Using Lemma \ref{l:Z}, for $N_1$ and $N_2$ sufficiently large, we obtain
  \begin{align*}
    & \left| \sum_{k=0}^j \binom{j}{k} q_{1,k} \right| = \left| \sum_{k=0}^j \binom{j}{k} g_1(t) \right| \le C \sum_{k=0}^j \binom{j}{k} |g_1(t)|, \\
    & \left| \sum_{k=0}^j \binom{j}{k} q_{2,k} \right| = \left| \sum_{k=0}^j \binom{j}{k} g_2(t) \right| \le C \sum_{k=0}^j \binom{j}{k} |g_2(t)|, \\
    & \left| \sum_{k=0}^j \binom{j}{k} q_{4,k} \right| = \left| -2 \frac{\sqrt{N_1} \sqrt{N_2}}{N_1 + N_2} \sum_{k=0}^j \binom{j}{k} g_4(t) \right| \le C \sum_{k=0}^j \binom{j}{k} |g_4(t)|.
  \end{align*}
  By proceeding similarly as in the case $l=3$, we get
  \[
    \sum_{k=0}^j \binom{j}{k} |g_l(t)| \le C_j' \langle U^c_{t,s} \psi, \mathcal{X} U^c_{t,s} \psi \rangle
  \]
  for $l = 1, 2, 4$.

  Let us consider $l \in \{5, 6, 7, 8 \}$.
  Write
  \begin{align*}
    g_5(t) & = \Im \int dx dz \, V_{1}(x-z) \overline{u_t}(z) \langle U^c_{t,s} \psi, [ b_x^* b_z \chi_1 b_x, \mathcal{X}_k ] U^c_{t,s} \psi \rangle, \\
    g_6(t) & = \Im \int dy dz \, V_{2}(y-z) \overline{v_t}(z) \langle U^c_{t,s} \psi, [ c_y^* c_z \chi_2 c_y, \mathcal{X}_k ] U^c_{t,s} \psi \rangle, \\
    g_7(t) & = \Im \int dx dy \, V_{12}(x-y) \overline{u_t}(x) \langle U^c_{t,s} \psi, [ c_y^* b_x \chi_1 c_y, \mathcal{X}_k ] U^c_{t,s} \psi \rangle, \\
    g_8(t) & = \Im \int dx dy \, V_{12}(x-y) \overline{v_t}(y) \langle U^c_{t,s} \psi, [ b_x^* c_y \chi_2 b_x, \mathcal{X}_k ] U^c_{t,s} \psi \rangle.
  \end{align*}
  For $N_1$ and $N_2$ sufficiently large, we obtain
  \begin{align*}
    & \left| \sum_{k=0}^j \binom{j}{k} q_{5,k} \right| = \left| \frac{2}{\sqrt{N_1}} \sum_{k=0}^j \binom{j}{k} g_5(t) \right| \le \frac{C}{\sqrt{N_1}} \sum_{k=0}^j \binom{j}{k} |g_5(t)|, \\
    & \left| \sum_{k=0}^j \binom{j}{k} q_{6,k} \right| = \left| \frac{2}{\sqrt{N_2}} \sum_{k=0}^j \binom{j}{k} g_6(t) \right| \le \frac{C}{\sqrt{N_2}} \sum_{k=0}^j \binom{j}{k} |g_6(t)|, \\
    & \left| \sum_{k=0}^j \binom{j}{k} q_{7,k} \right| = \left| \frac{2}{\sqrt{N_1}} \sum_{k=0}^j \binom{j}{k} g_7(t) \right| \le \frac{C}{\sqrt{N_1}} \sum_{k=0}^j \binom{j}{k} |g_7(t)|, \\
    & \left| \sum_{k=0}^j \binom{j}{k} q_{8,k} \right| = \left| \frac{2}{\sqrt{N_2}} \sum_{k=0}^j \binom{j}{k} g_8(t) \right| \le \frac{C}{\sqrt{N_2}} \sum_{k=0}^j \binom{j}{k} |g_8(t)|.
  \end{align*}
  We will estimate the contribution arising from $|g_8(t)|$.
  The estimates related to $|g_5(t)|$, $|g_6(t)|$ and $|g_7(t)|$ are very similar.
  First, since $[\chi_2, b_x] = 0$ and
  \[
    [\chi_2, (\mathcal{N}+1)^{j-k} \otimes (\mathcal{N}+1)^k] = 0,
  \]
  we have
  \[
    \begin{split}
      & [b_x^* c_y \chi_2 b_x, \mathcal{X}] \\
      & = [b_x^*, (\mathcal{N}+1)^{j-k} \otimes (\mathcal{N}+1)^k] c_y b_x \chi_2 + b_x^* [c_y, (\mathcal{N}+1)^{j-k} \otimes (\mathcal{N}+1)^k] b_x \chi_2 \\
      & \qquad + b_x^* c_y [b_x, (\mathcal{N}+1)^{j-k} \otimes (\mathcal{N}+1)^k] \chi_2.
    \end{split}
  \]
  Using \eqref{commB}-\eqref{commC}, we obtain
  \[
    \begin{split}
      & [b_x^* c_y \chi_2 b_x, \mathcal{X}] \\
      & = \sum_{l=0}^{j-k-1} \binom{j-k}{l} (-1)^l (\mathcal{N}+1)^{l/2} \otimes (\mathcal{N}+1)^{k/2} b_x^* b_x c_y \chi_2 (\mathcal{N}+1)^{l/2} \otimes \mathcal{N}^{k/2} \\
      & \quad + \sum_{l=0}^{k-1} \binom{k}{l} \mathcal{N}^{(j-k)/2} \otimes (\mathcal{N}+1)^{l/2} b_x^* b_x c_y \chi_2 \mathcal{N}^{(j-k)/2} \otimes \mathcal{N}^{l/2} \\
      & \quad + \sum_{l=0}^{j-k-1} \binom{j-k}{l} \mathcal{N}^{l/2} \otimes (\mathcal{N}+2)^{k/2} b_x^* b_x c_y \chi_2 \mathcal{N}^{l/2} \otimes (\mathcal{N}+1)^{k/2}.
    \end{split}
  \]
  Now, we substitute this formula into the expression for $g_8(t)$ and apply Cauchy-Schwarz inequality and Lemma \ref{l:relbN}.
  We get
  \[
    \begin{split}
      & |g_8(t)| \\
      & \le \sum_{l=0}^{j-k-1} \binom{j-k}{l} \int dx \, \| b_x (\mathcal{N}+1)^{l/2} \otimes (\mathcal{N}+1)^{k/2} U^c_{t,s} \psi \| \, \| c(V_{12}(x-\cdot) v_t(\cdot)) \chi_2 \| \\
      & \qquad \qquad \times \| b_x (\mathcal{N}+1)^{l/2} \otimes \mathcal{N}^{k/2} U^c_{t,s} \psi \| \\
      & \quad + \sum_{l=0}^{k-1} \binom{k}{l} \int dx \, \| b_x \mathcal{N}^{(j-k)/2} \otimes (\mathcal{N}+1)^{l/2} U^c_{t,s} \psi \| \, \| c(V_{12}(x-\cdot) v_t(\cdot)) \chi_2 \| \\
      & \qquad \qquad \times \| b_x \mathcal{N}^{(j-k)/2} \otimes \mathcal{N}^{l/2} U^c_{t,s} \psi \| \\
      & \quad + \sum_{l=0}^{j-k-1} \binom{j-k}{l} \int dx \, \| b_x \mathcal{N}^{l/2} \otimes (\mathcal{N}+2)^{k/2} U^c_{t,s} \psi \| \, \| c(V_{12}(x-\cdot) v_t(\cdot)) \chi_2 \| \\
      & \qquad \qquad \times \| b_x \mathcal{N}^{l/2} \otimes (\mathcal{N}+1)^{k/2} U^c_{t,s} \psi \|.
    \end{split}
  \]
  Notice that
  \[
    \| c(V_{12}(x-\cdot) v_t(\cdot)) \chi_2 \| \le \sqrt{M_2} \| V_{12}(x-\cdot) v_t(\cdot) \|
  \]
  because
  \[
    \begin{split}
      \| c(V_{12}(x-\cdot) v_t(\cdot)) \chi_2 \psi \| & \le \| V_{12}(x-\cdot) v_t(\cdot) \| \, \| (I \otimes \mathcal{N}^{1/2}) (I \otimes \chi(\mathcal{N} \le M_2)) \psi \| \\
      & \le \| V_{12}(x-\cdot) v_t(\cdot) \| \sqrt{M_2} \| \psi \|.
    \end{split}
  \]
  Furthermore
  \[
    \sup_{x \in \R^3} \| V_{12}(x-\cdot) v_t(\cdot) \| \le K \int dy \, (|v_t(y)|^2 + |\nabla v_t(y)|^2) \le C.
  \]
  Recall that
  \[
    \int b_x^* b_x = \mathcal{N} \otimes I.
  \]
  Using these observations and applying the Cauchy-Schwarz inequality, we find
  \[
    \begin{split}
      |g_8(t)| & \le C \sqrt{M_2} \sum_{l=0}^{j-k-1} \binom{j-k}{l} \langle U^c_{t,s} \psi, (\mathcal{N}+1)^{l+1} \otimes (\mathcal{N}+1)^k U^c_{t,s} \psi \rangle \\
      & \quad + C \sqrt{M_2} \sum_{l=0}^{k-1} \binom{k}{l} \langle U^c_{t,s} \psi, (\mathcal{N}+1)^{j-k+1} \otimes (\mathcal{N}+1)^l U^c_{t,s} \psi \rangle \\
      & \quad + C \sqrt{M_2} \sum_{l=0}^{j-k-1} \binom{j-k}{l} \langle U^c_{t,s} \psi, (\mathcal{N}+1)^{l+1} \otimes (\mathcal{N}+1)^k U^c_{t,s} \psi \rangle \\
      & \le C_{j-k} \sqrt{M_2} \langle U^c_{t,s} \psi, (\mathcal{N}+1)^{j-k} \otimes (\mathcal{N}+1)^k U^c_{t,s} \psi \rangle \\
      & \quad + C_{j-k} (1-\delta_{k0}) \sqrt{M_2} \langle U^c_{t,s} \psi, (\mathcal{N}+1)^{j-k+1} \otimes (\mathcal{N}+1)^{k-1} U^c_{t,s} \psi \rangle. \\
    \end{split}
  \]
  Finally, by estimating some terms from above, we get
  \[
    \begin{split}
      & \frac{1}{\sqrt{N_2}} \sum_{k=0}^j \binom{j}{k} |g_8(t)| \\
      & \le C_j' \sqrt{M_2} \sum_{k=0}^j \binom{j}{k} \langle U^c_{t,s} \psi, (\mathcal{N}+1)^{j-k} \otimes (\mathcal{N}+1)^k U^c_{t,s} \psi \rangle \\
      & = C_j' \sqrt{M_2/N_2} \langle U^c_{t,s} \psi, \big( (\mathcal{N}+1) \otimes I + I \otimes (\mathcal{N}+1) \big)^j U^c_{t,s} \psi \rangle \\
      & = C_j' \sqrt{M_2/N_2} \langle U^c_{t,s} \psi, \mathcal{X} U^c_{t,s} \psi \rangle.
    \end{split}
  \]
  By similar calculations, we obtain similar estimates arising from $|g_5(t)|$, $|g_6(t)|$ and $|g_7(t)|$ (with $M_2$ and $N_2$ replaced by $M_1$ and $N_2$, accordingly).

  Finally, let us consider $l \in \{9, 10\}$.
  Write
  \begin{align*}
    g_9(t) & = \Im \int dx \, (V_{12} * |v_t|^2)(x) u_t(x) \langle U^c_{t,s} \psi, [ b_x^*, \mathcal{X}_k ] U^c_{t,s} \psi \rangle, \\
    g_{10}(t) & = \Im \int dy \, (V_{12} * |u_t|^2)(y) v_t(y) \langle U^c_{t,s} \psi, [ c_y^*, \mathcal{X}_k ] U^c_{t,s} \psi \rangle.
  \end{align*}
  For $N_1$ and $N_2$ sufficiently large, we have
  \begin{align*}
    \left| \sum_{k=0}^j \binom{j}{k} q_{9,k} \right| & = \left| -2 \sqrt{N_1} \left( \frac{N_2}{N_1 + N_2} - c_2 \right) \sum_{k=0}^j \binom{j}{k} g_9(t) \right| \le C \sum_{k=0}^j \binom{j}{k} |g_9(t)|, \\
    \left| \sum_{k=0}^j \binom{j}{k} q_{10,k} \right| & = \left| -2 \sqrt{N_2} \left( \frac{N_1}{N_1 + N_2} - c_1 \right) \sum_{k=0}^j \binom{j}{k} g_{10}(t) \right| \le C \sum_{k=0}^j \binom{j}{k} |g_{10}(t)|.
  \end{align*}
  Using \eqref{commB}, we get
  \[
    \begin{split}
      g_9(t) & = \sum_{l=0}^{j-k-1} \binom{j-k}{l} (-1)^l \Im \int dx \, (V_{12}*|v_t|^2)(x) u_t(x) \\
      & \qquad \times \langle b_x (\mathcal{N}+1)^{l/2} \otimes (\mathcal{N}+1)^{k/2} U^c_{t,s} \psi, (\mathcal{N}+1)^{l/2} \otimes (\mathcal{N}+1)^{k/2} U^c_{t,s} \psi \rangle.
    \end{split}
  \]
  Now, applying the Cauchy-Schwarz inequality twice, we obtain
  \[
    \begin{split}
      |g_9(t)| & \le C^{1/2} \sum_{l=0}^{j-k-1} \binom{j-k}{l} \left( \int dx \, \| b_x (\mathcal{N}+1)^{l/2} \otimes (\mathcal{N}+1)^{k/2} U^c_{t,s} \psi \|^2 \right)^{1/2} \\
      & \qquad \qquad \times \| (\mathcal{N}+1)^{l/2} \otimes (\mathcal{N}+1)^{k/2} U^c_{t,s} \psi \| \\
      & \le C^{1/2} \sum_{l=0}^{j-k-1} \big( \| (\mathcal{N}+1)^{(l+1)/2} \otimes (\mathcal{N}+1)^{k/2} U^c_{t,s} \psi \| \\
      & \qquad \qquad \times \| (\mathcal{N}+2)^{l/2} \otimes (\mathcal{N}+1)^{k/2} U^c_{t,s} \psi \| \\
      & \le C'_j \langle U^c_{t,s} \psi, (\mathcal{N}+1)^{j-k} \otimes (\mathcal{N}+1)^k U^c_{t,s} \psi \rangle,
    \end{split}
  \]
  where we used that
  \[
    \int dx \, |(V_{12}*|v_t|^2)(x)|^2 |u_t(x)|^2 \le C.
  \]
  Therefore, by proceeding similarly as above, we get
  \[
    \begin{split}
      \sum_{k=0}^j \binom{j}{k} |g_9(t)| & \le C_j' \langle U^c_{t,s} \psi, \big( (\mathcal{N}+1) \otimes I + I \otimes (\mathcal{N}+1) \big)^j U^c_{t,s} \psi \rangle \\
      & = C_j' \langle U^c_{t,s} \psi, \mathcal{X} U^c_{t,s} \psi \rangle.
    \end{split}
  \]
  Furthermore, we have a similar estimate arising from $|g_{10}(t)|$.
  This completes the proof of \eqref{leftto1}, which proves Lemma \ref{l:step1}.
\end{proof}

\textbf{A priori estimates.}
As a first step to prove Proposition \ref{p:part2}, we introduced the truncated fluctuation dynamics $U^c_{t,s}$ and proved Lemma \ref{l:step1}.
Now, we want to compare the evolution $U_{t,s}$ with the evolution $U^c_{t,s}$.
To do this, we first need some a-priori estimates on the growth of the number of particles with respect to the fluctuation dynamics.
This is the contents of the next lemma.

\begin{lemma}
  \label{l:step2}
  For $\psi \in \mathcal{F} \otimes \mathcal{F}$ and $t,s \in \R$, we have
  \begin{equation}
    \label{step2a}
    \langle \psi, U(t,s)^* ( \mathcal{N} \otimes I + I \otimes \mathcal{N} ) U(t,s) \psi \rangle \le 24 \langle \psi, (\mathcal{N} \otimes I + I \otimes \mathcal{N} + N_1 + N_2 + 1) \psi \rangle.
  \end{equation}
  Furthermore, for $j \in \N$,
  \begin{align}
    & \langle \psi, U(t,s)^* ( \mathcal{N} \otimes I + I \otimes \mathcal{N} )^{2j} U(t,s) \psi \rangle \label{step2b}\\
    & \qquad \qquad \le C_j \langle \psi, (\mathcal{N} \otimes I + I \otimes \mathcal{N} + N_1 + N_2)^{2j} \psi \rangle, \notag \\
    & \langle \psi, U(t,s)^* ( \mathcal{N} \otimes I + I \otimes \mathcal{N} )^{2j+1} U(t,s) \psi \rangle \label{step2c} \\
    & \qquad \qquad \le D_j \langle \psi, (\mathcal{N} \otimes I + I \otimes \mathcal{N} + N_1 + N_2)^{2j+1} (\mathcal{N} \otimes I + I \otimes \mathcal{N} + 1) \psi \rangle \notag
  \end{align}
  for appropriate constants $C_j$ and $D_j$.
\end{lemma}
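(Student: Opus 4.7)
The plan is to exploit the Weyl-operator factorization
\[
U(t,s) \;=\; e^{i(\alpha(t)-\alpha(s))}\,\mathcal{W}_t^*\, e^{-i(t-s)\mathcal{H}_{N_1,N_2}}\,\mathcal{W}_s
\]
together with the fact that $\mathcal{H}_{N_1,N_2}$ preserves the number of particles in each factor of $\mathcal{F}\otimes\mathcal{F}$, and therefore commutes with every polynomial in $\mathcal{N}\otimes I$ and $I\otimes\mathcal{N}$. Writing $\mathcal{N}_\otimes := \mathcal{N}\otimes I + I\otimes\mathcal{N}$ for brevity, Lemma \ref{l:weyl}(b) yields
\[
\mathcal{M}_t \;:=\; \mathcal{W}_t\,\mathcal{N}_\otimes\,\mathcal{W}_t^*
\;=\; \mathcal{N}_\otimes + (N_1+N_2) - \sqrt{N_1}\bigl(b^*(u_t)+b(u_t)\bigr) - \sqrt{N_2}\bigl(c^*(v_t)+c(v_t)\bigr),
\]
and consequently $\mathcal{W}_t\,\mathcal{N}_\otimes^k\,\mathcal{W}_t^* = \mathcal{M}_t^k$. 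Setting $\Phi := e^{-i(t-s)\mathcal{H}_{N_1,N_2}}\mathcal{W}_s\psi$, the quantity to be estimated becomes simply
\[
\langle \psi, U(t,s)^*\mathcal{N}_\otimes^k U(t,s)\psi\rangle \;=\; \langle \Phi, \mathcal{M}_t^k\,\Phi\rangle.
\]

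The core of the argument is then the quadratic-form inequality
\[
\mathcal{M}_t^k \;\le\; C_k\,(\mathcal{N}_\otimes + N_1 + N_2 + 1)^k,
\]
where $C_k$ depends only on $\|u_t\|_{L^2}=\|v_t\|_{L^2}=1$ (conserved by \eqref{sysHartree}), hence is independent of $N_1,N_2$. I would prove this by induction on $k$. The base case $k=1$ follows from Lemma \ref{l:relbN} and the AM-GM estimate $2\sqrt{N_j}\,\|(\mathcal{N}+1)^{1/2}\psi\|\,\|\psi\|\le\|(\mathcal{N}+1)^{1/2}\psi\|^2+N_j\|\psi\|^2$, which yields $\pm\sqrt{N_j}(b^*(u_t)+b(u_t))\le\mathcal{N}\otimes I+N_j+1$ (and the analogous estimate for the $c$-operators), hence $\mathcal{M}_t\le 2(\mathcal{N}_\otimes+N_1+N_2+1)$. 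For the inductive step I would write $\mathcal{M}_t^k=\mathcal{M}_t^{k-1}\mathcal{M}_t$ and commute the linear parts of $\mathcal{M}_t$ past the powers of $\mathcal{N}_\otimes$ produced by the induction hypothesis; the commutators $[b(f),\mathcal{N}\otimes I]=b(f)$, $[b^*(f),\mathcal{N}\otimes I]=-b^*(f)$ (and their $c$-counterparts) produce only lower-order corrections, which are absorbed by the same AM-GM trick in which each prefactor $\sqrt{N_j}$ gets traded against a full $N_j$ that is then reabsorbed into the $N_1+N_2$ term on the right.

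Once this bound is in place, $[\mathcal{H}_{N_1,N_2},(\mathcal{N}_\otimes+N_1+N_2+1)^k]=0$ removes the middle Schr\"odinger propagators, giving $\langle\Phi,(\mathcal{N}_\otimes+N_1+N_2+1)^k\Phi\rangle = \langle\mathcal{W}_s\psi,(\mathcal{N}_\otimes+N_1+N_2+1)^k\mathcal{W}_s\psi\rangle$. The final conjugation by $\mathcal{W}_s^*$ is controlled by an entirely analogous induction (only the signs in front of $b^*(u_s),b(u_s),c^*(v_s),c(v_s)$ flip), yielding $\mathcal{W}_s^*(\mathcal{N}_\otimes+N_1+N_2+1)^k\mathcal{W}_s\le C_k'(\mathcal{N}_\otimes+N_1+N_2+1)^k$. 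Combining these two estimates establishes \eqref{step2b} for $k=2j$, and $k=1$ gives \eqref{step2a} (the explicit constant $24$ is recovered by keeping track of the factors of $2$ through the two conjugations). For the odd-power bound \eqref{step2c}, a naive even-power estimate at $k=2j+2$ would produce an extra $(N_1+N_2)$ in place of the sharper $(\mathcal{N}_\otimes+1)$; the improvement is obtained by strengthening the inductive claim itself to
\[
\mathcal{M}_t^{2j+1} \;\le\; D_j\,(\mathcal{N}_\otimes+N_1+N_2+1)^{2j+1}(\mathcal{N}_\otimes+1),
\]
via the same combinatorial expansion. The crucial parity observation is that among the $2j+1$ factors of $\mathcal{M}_t$ the total number of creation/annihilation operators is odd, forcing at least one unpaired $b^{(*)}$ or $c^{(*)}$ to be absorbed, through Lemma \ref{l:relbN}, against one extra $(\mathcal{N}_\otimes+1)^{1/2}$ rather than against $\sqrt{N_j}$.

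The main obstacle is precisely the inductive passage from $\mathcal{M}_t\le C(\mathcal{N}_\otimes+N_1+N_2+1)$ to $\mathcal{M}_t^k\le C_k(\mathcal{N}_\otimes+N_1+N_2+1)^k$: this is \emph{not} automatic, because $\mathcal{M}_t$ and $\mathcal{N}_\otimes+N_1+N_2+1$ do not commute. The proof demands careful combinatorial bookkeeping of the commutators generated when moving the linear parts of $\mathcal{M}_t$ through the polynomial in $\mathcal{N}_\otimes$, and of the correct $N_j$-scaling of each Cauchy--Schwarz estimate to ensure that the prefactors $\sqrt{N_j}$ never produce $N_j^{1/2}$-growth but only full $N_j$'s that are absorbed in the $N_1+N_2$ on the right. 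The parity refinement needed to obtain \eqref{step2c} with $(\mathcal{N}_\otimes+1)$ in place of $(N_1+N_2+1)$ is the most delicate ingredient.
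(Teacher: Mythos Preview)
Your approach for \eqref{step2a} and \eqref{step2b} is essentially the paper's: the paper also conjugates by the Weyl operator and introduces the self-adjoint operator
\[
Z_{t,s}=\mathcal{N}_\otimes-e^{i(t-s)\mathcal{H}}\bigl(\sqrt{N_1}\,\phi(u_t)\otimes I+\sqrt{N_2}\,I\otimes\phi(v_t)\bigr)e^{-i(t-s)\mathcal{H}}+N_1+N_2,
\]
which is unitarily equivalent to your $\mathcal{M}_t$, and then proves $Z_{t,s}^{2j}\le C_j(\mathcal{N}_\otimes+N_1+N_2)^{2j}$ by a simultaneous induction on the pair of inequalities $Z_{t,s}^{2j}\le C_j(\ldots)^{2j}$ and $Z_{t,s}^{j-1}(\mathcal{N}_\otimes+N_1+N_2)^2Z_{t,s}^{j-1}\le C_j(\ldots)^{2j}$, using the commutator expansion $[Z_{t,s}^{k-1},\mathcal{N}_\otimes]=\sum_m\binom{k-1}{m}Z_{t,s}^m\,\mathrm{ad}_{Z_{t,s}}^{k-1-m}(\mathcal{N}_\otimes)$ together with the bound $\mathrm{ad}_{Z_{t,s}}^m(\mathcal{N}_\otimes)\,\mathrm{ad}_{Z_{t,s}}^m(\mathcal{N}_\otimes)^*\le C(\mathcal{N}_\otimes+N_1+N_2)^2$. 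Your proposed induction is a legitimate variant of this.

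Where you diverge from the paper is in \eqref{step2c}. You dismiss the ``naive even-power estimate at $k=2j+2$'' and instead propose a parity argument to force one unpaired creation/annihilation operator. This is unnecessary, and your parity heuristic is not obviously correct as stated (the expansion of $\mathcal{M}_t^{2j+1}$ contains terms with any number $m\in\{0,\dots,2j+1\}$ of linear factors, not just odd $m$). The paper obtains \eqref{step2c} by a one-line Cauchy--Schwarz interpolation between the even powers $2j$ and $2j+2$: write
\[
\langle U\psi,\mathcal{N}_\otimes^{2j+1}U\psi\rangle
=\Bigl\langle \tfrac{\mathcal{N}_\otimes^{j+1}}{\sqrt{N_1+N_2}}U\psi,\ \sqrt{N_1+N_2}\,\mathcal{N}_\otimes^{j}U\psi\Bigr\rangle
\le \tfrac{1}{N_1+N_2}\langle U\psi,\mathcal{N}_\otimes^{2j+2}U\psi\rangle+(N_1+N_2)\langle U\psi,\mathcal{N}_\otimes^{2j}U\psi\rangle,
\]
apply \eqref{step2b} to each term, and observe that $(N_1+N_2)^{-1}(\mathcal{N}_\otimes+N_1+N_2)^{2j+2}\le(\mathcal{N}_\otimes+N_1+N_2)^{2j+1}(\mathcal{N}_\otimes+1)$ and $(N_1+N_2)(\mathcal{N}_\otimes+N_1+N_2)^{2j}\le(\mathcal{N}_\otimes+N_1+N_2)^{2j+1}$. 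This recovers exactly the factor $(\mathcal{N}_\otimes+1)$ on the right without any combinatorial work.
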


\begin{proof}
  We will use the shorthand $\mathcal{H} = \mathcal{H}_{N_1,N_2}$.
  Using Lemma \ref{l:weyl}(b), we calculate
  \[
    \begin{split}
      & U^*(t,s) (\mathcal{N} \otimes I) U(t,s) \\
      & = \int dx \, U^*(t,s) b_x^* b_x U(t,s) \\
      & = \int dx \, \mathcal{W}_s^* e^{i(t-s)\mathcal{H}} (b_x^* - \sqrt{N_1} \overline{u_t}(x)) (b_x - \sqrt{N_1} u_t(x)) e^{-i(t-s)\mathcal{H}} \mathcal{W}_s \\
      & = \mathcal{W}_s^* ( \mathcal{N} \otimes I - \sqrt{N_1} e^{i(t-s) \mathcal{H}} (\phi(u_t) \otimes I) e^{-i(t-s) \mathcal{H}} + N_1 ) \mathcal{W}_s.
    \end{split}
  \]
  Similarly, we obtain
  \[
    U^*(t,s) (I \otimes \mathcal{N}) U(t,s) = \mathcal{W}_s^* ( I \otimes \mathcal{N} - \sqrt{N_2} e^{i(t-s) \mathcal{H}} (I \otimes \phi(v_t)) e^{-i(t-s) \mathcal{H}} + N_2 ) \mathcal{W}_s.
  \]

  Now, using the Cauchy-Schwarz inequality, Lemma \ref{l:relbN}, and Lemma \ref{l:weyl}, we have
  \[
    \begin{split}
      & \langle \psi, U^*(t,s) (\mathcal{N} \otimes I) U(t,s) \psi \rangle \\
      & = \langle \psi, (\mathcal{N} \otimes I + 2N_1) \psi \rangle - \sqrt{N_1} \langle \psi, (\phi(u_s) \otimes I) \psi \rangle \\
      & \quad - \sqrt{N_1} \langle e^{-i(t-s) \mathcal{H}} \mathcal{W}_s \psi, (\phi(u_t) \otimes I) e^{-i(t-s) \mathcal{H}} \mathcal{W}_s \psi \rangle \\
      & \le \langle \psi, (\mathcal{N} \otimes I + 2N_1) \psi \rangle + \sqrt{N_1} \| (\phi(u_s) \otimes I) \psi \| \, \| \psi \| \\
      & \quad + \sqrt{N_1} \| \psi \| \, \| (\phi(u_t) \otimes I) e^{-i(t-s) \mathcal{H}} \mathcal{W}_s \psi \| \\
      & \le \langle \psi, (\mathcal{N} \otimes I + 2N_1) \psi \rangle + 2N_1 \langle \psi, \psi \rangle + 2 \langle \psi, ((\mathcal{N}+1) \otimes I) \psi \rangle \\
      & \quad + 2 N_1 \langle \psi, \psi \rangle + 2 \langle \psi, \mathcal{W}_s^* e^{i(t-s) \mathcal{H}} ((\mathcal{N}+1) \otimes I) e^{-i(t-s) \mathcal{H}} \mathcal{W}_s \psi \rangle \\
      & \le 12 \langle \psi, (\mathcal{N} \otimes I + N_1 + 1) \psi \rangle.
    \end{split}
  \]
  Similarly, we obtain
  \[
    \langle \psi, U^*(t,s) (I \otimes \mathcal{N}) U(t,s) \psi \rangle \le 12 \langle \psi, (I \otimes \mathcal{N} + N_2 + 1) \psi \rangle.
  \]
  Therefore
  \[
    \langle \psi, U^*(t,s) (\mathcal{N} \otimes I + I \otimes \mathcal{N}) U(t,s) \psi \rangle \le 24 \langle \psi, (\mathcal{N} \otimes I + I \otimes \mathcal{N} + N_1 + N_2 + 1) \psi \rangle,
  \]
  which proves \eqref{step2a}.

  To prove \eqref{step2b}, we will use induction.
  Set
  \[
    Z_{t,s} = \mathcal{N} \otimes I + I \otimes \mathcal{N} - e^{i(t-s) \mathcal{H}} ( \sqrt{N_1} \phi(u_t) \otimes I + \sqrt{N_2} I \otimes \phi(v_t) ) e^{-i(t-s) \mathcal{H}} + N_1 + N_2.
  \]
  Then we can write
  \[
    U^*(t,s) (\mathcal{N} \otimes I + I \otimes \mathcal{N}) U(t,s) = \mathcal{W}_s^* Z_{t,s} \mathcal{W}_s.
  \]

  We claim that
  \begin{equation}
    \label{ind1}
    Z_{t,s}^2 \le C ( \mathcal{N} \otimes I + I \otimes \mathcal{N}  + N_1 + N_2 )^2
  \end{equation}
  for some constant $C$.
  In fact, by Cauchy-Schwarz inequality, we have
  \[
    \langle \psi, Z_{t,s}^2 \psi \rangle \le 3(A + B),
  \]
  where
  \begin{align*}
    A & = \langle \psi, (\mathcal{N} \otimes I + I \otimes \mathcal{N} + N_1 + N_2)^2 \psi \rangle, \\
    B & = \langle \psi, e^{i(t-s) \mathcal{H}} (\sqrt{N_1} \phi(u_t) \otimes I + \sqrt{N_2} I \otimes \phi(v_t) )^2 e^{-i(t-s) \mathcal{H}} \psi \rangle.
  \end{align*}
  Using the Cauchy-Schwarz inequality and Lemma \ref{l:relbN}, it is simple to obtain
  \[
    B \le 24 \langle \psi, (\mathcal{N} \otimes I + I \otimes \mathcal{N} + N_1 + N_2)^2 \psi \rangle.
  \]
  These estimates for $A$ and $B$ prove the claim.

  Now, using the notation $\mathrm{ad}_A(B) = [B,A]$, we calculate
  \[
    \begin{split}
      \mathrm{ad}_{Z_{t,s}}^m (\mathcal{N} \otimes I + I \otimes \mathcal{N}) & = e^{i(t-s) \mathcal{H}} \big( \sqrt{N_1} (a(u_t) + (-1)^m a^*(u_t) ) \otimes I \\
      & \qquad \qquad + \sqrt{N_2} I \otimes (a(v_t) + (-1)^m a^*(v_t)) \big) e^{-i(t-s) \mathcal{H}} \\
      & \quad - (1+(-1)^m) (N_2 + N_2).
    \end{split}
  \]
  By performing a long but straightforward calculation, we prove that
  \begin{equation}
    \label{ind2}
    \mathrm{ad}_Z^m(\mathcal{N} \otimes I + I \otimes \mathcal{N}) \mathrm{ad}_Z^m(\mathcal{N} \otimes I + I \otimes \mathcal{N})^* \le C ( \mathcal{N} \otimes I + I \otimes \mathcal{N} + N_1 + N_2 )^2
  \end{equation}
  for all $m \in \mathbb{N}$.

  By induction, it follows that, for all $j \in \mathbb{N}$, there are constants $C_j$ and $D_j$ such that
  \begin{equation}
    \label{Z1}
    Z_{t,s}^{j-1} (\mathcal{N} \otimes I + I \otimes \mathcal{N} + N_1 + N_2)^2 Z_{t,s}^{j-1} \le C_j (\mathcal{N} \otimes I + I \otimes \mathcal{N} + N_1 + N_2)^{2j},
  \end{equation}
  \begin{equation}
    \label{Z2}
    Z_{t,s}^{2j} \le D_j (\mathcal{N} \otimes I + I \otimes \mathcal{N} + N_1 + N_2)^{2j}.
  \end{equation}
  In fact, this is already proved for $j=1$ by \eqref{ind1} and \eqref{ind2}.
  Suppose that \eqref{Z1} and \eqref{Z2} hold true for all $j < k$.
  We will prove the estimates for $j = k$.

  For \eqref{Z1}, we have
  \[
    \begin{split}
      & Z_{t,s}^{k-1} ( \mathcal{N} \otimes I + I \otimes \mathcal{N} + N_1 + N_2 )^2 Z_{t,s}^{k-1} \\
      & = (\mathcal{N} \otimes I + I \otimes \mathcal{N} + N_1 + N_2) Z_{t,s}^{k-1} (\mathcal{N} \otimes I + I \otimes \mathcal{N} + N_1 + N_2) Z_{t,s}^{k-1} \\
      & \quad + [Z_{t,s}^{k-1}, \mathcal{N} \otimes I + I \otimes \mathcal{N} + N_1 + N_2] (\mathcal{N} \otimes I + I \otimes \mathcal{N} + N_1 + N_2) Z_{t,s}^{k-1} \\
      & \le 4 (\mathcal{N} \otimes I + I \otimes \mathcal{N} + N_1 + N_2) Z_{t,s}^{2k-2} (\mathcal{N} \otimes I + I \otimes \mathcal{N} + N_1 + N_2) \\
      & \quad + \tfrac{1}{2} Z_{t,s}^{k-1} ( \mathcal{N} \otimes I + I \otimes \mathcal{N} + N_1 + N_2 )^2 Z_{t,s}^{k-1} \\
      & \quad + 4 [Z_{t,s}^{k-1}, \mathcal{N} \otimes I + I \otimes \mathcal{N} + N_1 + N_2] \, [Z_{t,s}^{k-1}, \mathcal{N} \otimes I + I \otimes \mathcal{N} + N_1 + N_2]^*.
    \end{split}
  \]
  Thus
  \[
    \begin{split}
      & Z_{t,s}^{k-1} ( \mathcal{N} \otimes I + I \otimes \mathcal{N} + N_1 + N_2 )^2 Z_{t,s}^{k-1} \\
      & \le 8 (\mathcal{N} \otimes I + I \otimes \mathcal{N} + N_1 + N_2) Z_{t,s}^{2k-2} (\mathcal{N} \otimes I + I \otimes \mathcal{N} + N_1 + N_2) \\
      & \quad + 8 [Z_{t,s}^{k-1}, \mathcal{N} \otimes I + I \otimes \mathcal{N} + N_1 + N_2] \, [Z_{t,s}^{k-1}, \mathcal{N} \otimes I + I \otimes \mathcal{N} + N_1 + N_2]^* \\
      & \le C_k (\mathcal{N} \otimes I + I \otimes \mathcal{N} + N_1 + N_2)^{2k},
    \end{split}
  \]
  as desired.
  Here, we used the operator inequality
  \[
    \begin{split}
      & [Z_{t,s}^{k-1}, \mathcal{N} \otimes I + I \otimes \mathcal{N} + N_1 + N_2] \, [Z_{t,s}^{k-1}, \mathcal{N} \otimes I + I \otimes \mathcal{N} + N_1 + N_2]^* \\
      & \le (\mathcal{N} \otimes I + I \otimes \mathcal{N} + N_1 + N_2)^{2k},
    \end{split}
  \]
  which follows using the commutator expansion
  \[
    [Z_{t,s}^{k-1}, \mathcal{N} \otimes I + I \otimes \mathcal{N}] = \sum_{m=0}^{k-2} \binom{k-1}{m} Z_{t,s}^m \mathrm{ad}_{Z_{t,s}}^{k-1-m}(\mathcal{N} \otimes I + I \otimes \mathcal{N}).
  \]

  For \eqref{Z2}, using \eqref{ind1} and \eqref{Z1}, we obtain
  \[
    \begin{split}
      Z_{t,s}^{2k} & = Z_{t,s}^{k-1} Z_{t,s}^2 Z_{t,s}^{k-1} \\
      & \le C Z_{t,s}^{k-1} (\mathcal{N} \otimes I + I \otimes \mathcal{N} + N_1 + N_2)^2 Z_{t,s}^{k-1} \\
      & \le C C_j (\mathcal{N} \otimes I + I \otimes \mathcal{N} + N_1 + N_2)^{2k},
    \end{split}
  \]
  as desired.
  Therefore, we have proved \eqref{Z1} and \eqref{Z2}.

  Let us finish the proof of \eqref{step2b}.
  First we observe that, similarly as we estimated $Z_{t,s}^{2j}$, we can prove that
  \begin{multline*}
    (\mathcal{N} \otimes I + I \otimes \mathcal{N} + \sqrt{N_1} (\phi(u_t) \otimes I) + \sqrt{N_1} (I \otimes \phi(v_t)) + 2(N_1 + N_2))^{2j} \\
    \le C_j (\mathcal{N} \otimes I + I \otimes \mathcal{N} + N_1 + N_2)^{2j}
  \end{multline*}
  for all $j \in \mathbb{N}$.
  Hence
  \[
    \begin{split}
      & \langle \psi, U(t,s)^* (\mathcal{N} \otimes I + I \otimes \mathcal{N})^{2j} U(t,s) \psi \rangle \\
      & = \langle \mathcal{W}_t \psi, Z_{t,s}^{2j} \mathcal{W}_t \psi \rangle \\
      & \le C_j \langle \mathcal{W}_t \psi, (\mathcal{N} \otimes I + I \otimes \mathcal{N} + N_1 + N_2)^{2j} \mathcal{W}_t \psi \rangle \\
      & = C_j \langle \psi, (\mathcal{N} \otimes I + I \otimes \mathcal{N} + \sqrt{N_1} (\phi(u_t) \otimes I) + \sqrt{N_1} (I \otimes \phi(v_t)) + 2(N_1 + N_2))^{2j} \psi \rangle \\
      & \le C_j \langle \psi, (\mathcal{N} \otimes I + I \otimes \mathcal{N} + N_1 + N_2)^{2j} \psi \rangle.
    \end{split}
  \]
  This proves \eqref{step2b}.

  Finally, we derive \eqref{step2c} using \eqref{step2b}:
  \[
    \begin{split}
      & \langle \psi, U(t,s)^* (\mathcal{N} \otimes I + I \otimes \mathcal{N})^{2j+1} U(t,s) \psi \rangle \\
      & = \left\langle \frac{(\mathcal{N} \otimes I + I \otimes \mathcal{N})^{j+1}}{\sqrt{N_1 + N_2}} U(t,s) \psi, \sqrt{N_1 + N_2} (\mathcal{N} \otimes I + I \otimes \mathcal{N})^j U(t,s) \psi \right\rangle \\
      & \le \frac{1}{N_1+N_2} \langle \psi, U(t,s)^* (\mathcal{N} \otimes I + I \otimes \mathcal{N})^{2j+2} U(t,s) \psi \rangle \\
      & \quad + (N_1+N_2) \langle \psi, U(t,s) (\mathcal{N} \otimes I + I \otimes \mathcal{N})^{2j} U(t,s) \psi \rangle \\
      & \le \frac{C_{j+1}}{N_1+N_2} \langle \psi, (\mathcal{N} \otimes I + I \otimes \mathcal{N} + N_1 + N_2)^{2j+2} \psi \rangle \\
      & \quad + (N_1+N_2) C_j \langle \psi, (\mathcal{N} \otimes I + I \otimes \mathcal{N} + N_1 + N_2)^{2j} \psi \rangle \\
      & \le D_j \langle \psi, (\mathcal{N} \otimes I + I \otimes \mathcal{N} + N_1 + N_2)^{2j+1}(\mathcal{N} \otimes I + I \otimes \mathcal{N} + 1) \psi \rangle
    \end{split}
  \]
  for some constants $C_j$ and $D_j$.
  The proof of Lemma \ref{l:step2} is complete.
\end{proof}

\textbf{Comparison of the $U$ and $U^c$ dynamics.}
Now that we have a-priori estimates for the number of particles with respect to the fluctuation dynamics $U$, we can compare the evolution $U_{t,s}$ with the evolution $U_{t,s}^c$.

\begin{lemma}
  \label{l:step3}
  For every $j \in \mathbb{N}$, there exist constants $C_j$ and $K_j$ such that
  \begin{equation}
    \label{step3a}
    \begin{split}
      & |\langle U(t,s) \psi, (\mathcal{N} \otimes I + I \otimes \mathcal{N})^j (U(t,s) - U^c(t,s)) \psi \rangle| \\
      & \le C_j \left[ \left( \frac{N_1+N_2}{M_1} \right)^j + \left( \frac{N_1+N_2}{M_2} \right)^j \right] \| ((\mathcal{N}+1) \otimes I + I \otimes (\mathcal{N}+1))^{j+1} \psi \|^2 \\
      & \quad \qquad \times \frac{\exp(K_j (1 + \sqrt{M_1/N_1} + \sqrt{M_2/N_2}) |t-s|)}{1 + \sqrt{M_1/N_1} + \sqrt{M_2/N_2}}
    \end{split}
  \end{equation}
  and
  \begin{equation}
    \label{step3b}
    \begin{split}
      & |\langle U^c(t,s) \psi, (\mathcal{N} \otimes I + I \otimes \mathcal{N})^j (U(t,s) - U^c(t,s)) \psi \rangle| \\
      & \le C_j \left[ \frac{1}{M_1^j} + \frac{1}{M_2^j} \right] \| ((\mathcal{N}+1) \otimes I + I \otimes (\mathcal{N}+1))^{j+1} \psi \|^2 \\
      & \quad \qquad \times \frac{\exp(K_j (1 + \sqrt{M_1/N_1} + \sqrt{M_2/N_2}) |t-s|)}{1 + \sqrt{M_1/N_1} + \sqrt{M_2/N_2}}
    \end{split}
  \end{equation}
  for $\psi \in \mathcal{F} \otimes \mathcal{F}$ and $t,s \in \R$.
\end{lemma}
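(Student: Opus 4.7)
The plan is to express $U(t,s) - U^c(t,s)$ through Duhamel's formula, split the resulting matrix element by Cauchy--Schwarz, and control the two factors by the a priori bounds of Lemmas \ref{l:step1} and \ref{l:step2}. The crucial gain of $M_1^{-j}$ or $M_2^{-j}$ comes from the spectral projectors $I - \chi_1$ and $I - \chi_2$ built into the difference $D(\tau) := \mathcal{L}(\tau) - \mathcal{L}^c(\tau) = C_{N_1,N_2}(\tau) - C^c_{N_1,N_2}(\tau)$, which is a finite sum of cubic monomials in the operators $b_x^\#, c_y^\#$ with prefactors of size $O(1/\sqrt{N_p})$ or $O(\sqrt{N_k}/(N_1+N_2))$, each containing exactly one factor $I-\chi_1$ or $I-\chi_2$. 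The basic identity is
\[
  U(t,s) - U^c(t,s) \;=\; -i \int_s^t U(t,\tau)\, D(\tau)\, U^c(\tau,s)\, d\tau.
\]

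To prove \eqref{step3a}, I substitute this identity into the left-hand side and use $U(t,s) = U(t,\tau) U(\tau,s)$ together with the unitarity of $U(t,\tau)$ to rewrite the matrix element as
\[
  \int_s^t \Big\la U(\tau,s)\psi,\; U(t,\tau)^* (\mathcal{N}\otimes I + I\otimes\mathcal{N})^j U(t,\tau)\, D(\tau)\, U^c(\tau,s)\psi \Big\ra d\tau,
\]
and apply Cauchy--Schwarz. The first factor is estimated by invoking Lemma \ref{l:step2} (in both its even and odd forms) to trade the $U$-trajectory norm for a data norm, yielding a bound of the form $C\, (N_1+N_2)^j\, \|((\mathcal{N}+1)\otimes I + I\otimes(\mathcal{N}+1))^{j+1}\psi\|$, the elementary operator inequality $(\mathcal{N}\otimes I + I\otimes\mathcal{N} + N_1+N_2)^k \le C (N_1+N_2)^k ((\mathcal{N}+1)\otimes I + I\otimes (\mathcal{N}+1))^k$ being responsible for the prefactor $(N_1+N_2)^j$. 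For the second factor $\|D(\tau)\, U^c(\tau,s)\psi\|$, each occurring projector $I-\chi_p$ is commuted past the surrounding creation/annihilation operators (which only shifts its threshold by an $O(1)$ amount) and then bounded by the Markov-type estimate
\[
  \big\| \chi(\mathcal{N}\otimes I > M_1)\, \phi \big\| \;\le\; M_1^{-j}\, \|(\mathcal{N}\otimes I)^j \phi\|,
\]
(and its analogue in the second component) to produce the factor $M_1^{-j}$ or $M_2^{-j}$; the remaining cubic string of $b^\#, c^\#$ is controlled through Lemma \ref{l:relbN}, raising the exponent of $\mathcal{N}+1$ by a bounded amount. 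Lemma \ref{l:step1} then converts the resulting norm on $U^c(\tau,s)\psi$ into one on $\psi$ at the cost of the exponential $\exp(K_j(1+\sqrt{M_1/N_1}+\sqrt{M_2/N_2})|\tau-s|)$. Integration in $\tau$ from $s$ to $t$ finally introduces the denominator $1+\sqrt{M_1/N_1}+\sqrt{M_2/N_2}$ announced in the statement.

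The estimate \eqref{step3b} is established by exactly the same scheme, except that the splitting $U^c(t,s) = U^c(t,\tau) U^c(\tau,s)$ is used in place of the analogous decomposition of $U(t,s)$, so that both factors produced by Cauchy--Schwarz live on $U^c$-trajectories and are controlled purely by Lemma \ref{l:step1}. In this variant Lemma \ref{l:step2} is never invoked and the $(N_1+N_2)^j$ factor disappears, leaving only the $M_1^{-j} + M_2^{-j}$ gain from the cutoff.

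The main obstacle is the combinatorial bookkeeping of how the power $(\mathcal{N}\otimes I + I\otimes\mathcal{N})^j$ interacts with each of the many cubic terms of $D(\tau)$ and with the projectors $\chi_p$: expanding the commutators produces numerous subterms that must all collapse into the single bound stated. The commutator identities \eqref{commB}--\eqref{commC} already developed for Lemma \ref{l:step1} dispose of these systematically, so that no genuinely new mechanism is required; what has to be verified is only that the worst combination of exponents of $\mathcal{N}+1$, of $M_p$, and of $N_1+N_2$ arising in any subterm is precisely the one appearing on the right-hand sides of \eqref{step3a}--\eqref{step3b}. Lemma \ref{l:Z} is used uniformly throughout to absorb prefactors such as $\sqrt{N_1 N_2}/(N_1+N_2)$ into $O(1)$ constants.
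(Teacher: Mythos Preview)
Your overall strategy---Duhamel's formula, the Markov-type bound $\chi(\mathcal{N}>M)\le(\mathcal{N}/M)^{2j}$ to extract the gain $M^{-j}$, then Lemmas~\ref{l:step1}, \ref{l:step2} and \ref{l:Z}---is exactly the paper's. The difference is in \emph{where} the Cauchy--Schwarz split is made. You write the second factor as $\|D(\tau)U^c(\tau,s)\psi\|$, i.e.\ you keep the entire cubic monomial on one side of the split. The paper instead performs Cauchy--Schwarz in the $x$-integral after moving one $b_x$ across: the left factor becomes $\|(\mathcal{N}^{1/2}\otimes I)\,U(t,r)^*(\mathcal{N}\otimes I+I\otimes\mathcal{N})^jU(t,s)\psi\|$ (carrying one creation/annihilation operator), while the right factor keeps only two operators together with $\chi_1^c$. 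This balanced distribution is what produces exactly the power $\|(\,(\mathcal{N}+1)\otimes I+I\otimes(\mathcal{N}+1)\,)^{j+1}\psi\|^2$ in the statement; your unbalanced split costs an extra half-power of $\mathcal{N}+1$ on the $U^c$-side, so you would end up with $j+\tfrac32$ rather than $j+1$. That weaker version still suffices for Proposition~\ref{p:part2} (applied with $\psi=\omega$, where all such norms are constants), but it is not the lemma as stated. Your remark that Lemma~\ref{l:step2} is needed ``in both its even and odd forms'' actually matches the paper's balanced split, not your global one---with $\|D(\tau)U^c\psi\|$ as the right factor the left factor is simply $\|(\mathcal{N}\otimes I+I\otimes\mathcal{N})^jU(t,s)\psi\|$, which needs only \eqref{step2b}.

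For \eqref{step3b} the two routes genuinely diverge. Your observation that unitarity of $U(t,\tau)$ kills it inside $\|U(t,\tau)D(\tau)U^c(\tau,s)\psi\|$, so that both factors live on $U^c$-trajectories and Lemma~\ref{l:step2} is unnecessary, is correct \emph{for your split} and in fact yields an extra favourable factor $1/\sqrt{N_p}$ beyond what is claimed. The paper, by contrast, still passes one $b_x$ to the left, which reintroduces a conjugation by $U(t,r)$ and therefore requires the basic estimate \eqref{step2a}; this is precisely the mechanism that brings the power on $\psi$ back down to $j+1$. So your simplification for \eqref{step3b} and the exact exponent $j+1$ are mutually exclusive: you can have one or the other, and the paper opts for the latter.
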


\begin{proof}
  To prove \eqref{step3a}, we use the identity
  \[
    U(t,s) - U^c(t,s) = -i \int_s^t dr \, U(t,r)(\mathcal{L}(r) - \mathcal{L}^c(r)) U^c(r,s)
  \]
  and observe that
  \[
    \begin{split}
      \mathcal{L}(t) - \mathcal{L}^c(t) & = C_{N_1,N_2}(t) - C^c_{N_1,N_2}(t) \\
      & = \frac{1}{\sqrt{N_1}} \int dx dz \, V_1(x-z) b_x^* (u_t(z) \chi_1^c b_z^* + \overline{u_t}(z) b_z \chi_1^c) b_x \\
      & \quad + \frac{1}{\sqrt{N_2}} \int dy dz \, V_2(y-z) c_y^* (v_t(z) \chi_2^c c_z^* + \overline{v_t}(z) c_z \chi_2^c) c_y \\
      & \quad + \frac{\sqrt{N_1}}{N_1 + N_2} \int dx dy \, V_{12}(x-y) c_y^* ( u_t(x) \chi_1^c b_x^* + \overline{u_t}(x) b_x \chi_1^c ) c_y \\
      & \quad + \frac{\sqrt{N_2}}{N_1 + N_2} \int dx dy \, V_{12}(x-y) b_x^* ( v_t(y) \chi_2^c c_y^* + \overline{v_t}(y) c_y \chi_2^c ) b_x
    \end{split}
  \]
  where
  \begin{align*}
    \chi_1^c & = \chi(\mathcal{N} > M_1) \otimes I \\
    \chi_2^c & = I \otimes \chi(\mathcal{N} > M_2).
  \end{align*}
  Using these formulae, we write
  \begin{equation}
    \label{expw}
    \langle U(t,s) \psi, (\mathcal{N} \otimes I + I \otimes \mathcal{N})^j (U(t,s) - U^c(t,s)) \psi \rangle = \sum_{j=1}^8 w_j
  \end{equation}
  with
  \begin{align*}
    w_1 & = \frac{-i}{\sqrt{N_1}} \int_s^t dr \int dx \langle b_x U_{t,s}^* (\mathcal{N} \otimes I + I \otimes \mathcal{N})^j U_{t,s} \psi, b(V_1(x-\cdot) u_t(\cdot)) \chi_1^c b_x U^c_{t,s} \psi \rangle, \\
    w_2 & = \frac{-i}{\sqrt{N_1}} \int_s^t dr \int dx \langle b_x U_{t,s}^* (\mathcal{N} \otimes I + I \otimes \mathcal{N})^j U_{t,s} \psi, \chi_1^c b^*(V_1(x-\cdot) u_t(\cdot)) b_x U^c_{t,s} \psi \rangle, \\
    w_3 & = \frac{-i}{\sqrt{N_2}} \int_s^t dr \int dy \langle c_y U_{t,s}^* (\mathcal{N} \otimes I + I \otimes \mathcal{N})^j U_{t,s} \psi, c(V_2(y-\cdot) v_t(\cdot)) \chi_2^c c_y U^c_{t,s} \psi \rangle, \\
    w_4 & = \frac{-i}{\sqrt{N_2}} \int_s^t dr \int dy \langle c_y U_{t,s}^* (\mathcal{N} \otimes I + I \otimes \mathcal{N})^j U_{t,s} \psi, \chi_2^c c^*(V_2(y-\cdot) v_t(\cdot)) c_y U^c_{t,s} \psi \rangle, \\
    w_5 & = \alpha_1 \int_s^t dr \int dy \langle c_y U_{t,s}^* (\mathcal{N} \otimes I + I \otimes \mathcal{N})^j U_{t,s} \psi, c(V_2(y-\cdot) v_t(\cdot)) \chi_2^c c_y U^c_{t,s} \psi \rangle, \\
    w_6 & = \alpha_1 \int_s^t dr \int dy \langle c_y U_{t,s}^* (\mathcal{N} \otimes I + I \otimes \mathcal{N})^j U_{t,s} \psi, \chi_2^c c^*(V_2(y-\cdot) v_t(\cdot)) c_y U^c_{t,s} \psi \rangle, \\
    w_7 & = \alpha_2 \int_s^t dr \int dx \langle b_x U_{t,s}^* (\mathcal{N} \otimes I + I \otimes \mathcal{N})^j U_{t,s} \psi, b(V_1(x-\cdot) u_t(\cdot)) \chi_1^c b_x U^c_{t,s} \psi \rangle, \\
    w_8 & = \alpha_2 \int_s^t dr \int dx \langle b_x U_{t,s}^* (\mathcal{N} \otimes I + I \otimes \mathcal{N})^j U_{t,s} \psi, \chi_1^c b^*(V_1(x-\cdot) u_t(\cdot)) b_x U^c_{t,s} \psi \rangle,
  \end{align*}
  where
  \[
    \alpha_j = \frac{-i \sqrt{N_j}}{N_1+N_2}.
  \]
  Notice that $w_j = w_j(t)$ for $j = 1, \dots, 8$.
  Hence, to estimate the absolute value of the left hand side of \eqref{expw}, we need to estimate $|w_j|$ for each $j$.

  Let us estimate $|w_1|$.
  We will denote by $\chi_j^c(Q)$ the operator $\chi_j^c$ with $M_j$ replaced by $Q$.
  Calculating, we obtain
  \[
    \begin{split}
      & |w_1| \\
      & \le \frac{1}{\sqrt{N_1}} \int_s^t dr \int dx \, \| b_x U_{t,s}^* (\mathcal{N} \otimes I + I \otimes \mathcal{N})^j U_{t,s} \psi \| \\
      & \qquad \qquad \qquad \qquad \times \| b(V_1(x-\cdot) u_t(\cdot)) b_x \chi_1^c(M_1+1) U^c_{t,s} \psi \| \\
      & \le \frac{1}{\sqrt{N_1}} \sup_x \| V_1(x-\cdot) u_t(\cdot) \| \int_s^t dr \int dx \, \| b_x U_{t,s}^* (\mathcal{N} \otimes I + I \otimes \mathcal{N})^j U_{t,s} \psi \| \\
      & \qquad \qquad \qquad \qquad \times \| b_x ((\mathcal{N}-1)^{1/2} \otimes I) \chi_1^c(M_1+1) U^c_{t,s} \psi \| \\
      & \le \frac{C}{\sqrt{N_1}} \int_s^t dr \, \| (\mathcal{N}^{1/2} \otimes I) U_{t,s}^* (\mathcal{N} \otimes I + I \otimes \mathcal{N})^j U_{t,s} \psi \| \, \| (\mathcal{N} \otimes I) \chi_1^c U^c_{t,s} \psi \|.
    \end{split}
  \]
  Furthermore, using Lemma \ref{l:step2},
  \begin{equation}
    \label{est1}
    \begin{split}
      & \| (\mathcal{N}^{1/2} \otimes I) U_{t,s}^* (\mathcal{N} \otimes I + I \otimes \mathcal{N})^j U_{t,s} \psi \|^2 \\
      & = \langle (\mathcal{N} \otimes I + I \otimes \mathcal{N})^j U_{t,s} \psi, U_{t,s} (\mathcal{N} \otimes I) U_{t,s}^* (\mathcal{N} \otimes I + I \otimes \mathcal{N})^j U_{t,s} \psi \rangle \\
      & \le 24 \langle (\mathcal{N} \otimes I + I \otimes \mathcal{N})^j U_{t,s} \psi, \\
      & \qquad \qquad (\mathcal{N} \otimes I + I \otimes \mathcal{N} + N_1 + N_2 + 1) (\mathcal{N} \otimes I + I \otimes \mathcal{N})^j U_{t,s} \psi \rangle \\
      & \le 24 \langle U_{t,s} \psi, (\mathcal{N} \otimes I + I \otimes \mathcal{N})^{2j+1} U_{t,s} \psi \rangle \\
      & \quad + 24 (N_1 + N_2 + 1) \langle U_{t,s} \psi, (\mathcal{N} \otimes I + I \otimes \mathcal{N})^{2j} U_{t,s} \psi \rangle \\
      & \le 24 D_j \langle \psi, (\mathcal{N} \otimes I + I \otimes \mathcal{N})^{2j+1} (\mathcal{N} \otimes I + I \otimes \mathcal{N} + 1) \psi \rangle \\
      & \quad + 24(N_1 + N_2 + 1) C_j \langle \psi, (\mathcal{N} \otimes I + I \otimes \mathcal{N} + N_1 + N_2)^{2j} \psi \rangle \\
      & \le \tilde{D}_j (N_1 + N_2)^{2j+1} \langle \psi, (\mathcal{N} \otimes I + I \otimes \mathcal{N} + 1)^{2j+2} \psi \rangle \\
      & \quad + \tilde{C}_j (N_1 + N_2)^{2j+1} \langle \psi, (\mathcal{N} \otimes I + I \otimes \mathcal{N})^{2j} \psi \rangle \\
      & \le C_j (N_1 + N_2)^{2j+1} \langle \psi, ((\mathcal{N}+1) \otimes I + I \otimes (\mathcal{N}+1))^{2j+2} \psi \rangle
    \end{split}
  \end{equation}
  and
  \[
    \begin{split}
      & \| (\mathcal{N} \otimes I) \chi_1^c U^c_{t,s} \psi \|^2 \\
      & = \langle (\mathcal{N} \otimes I) U^c_{t,s} \psi, \chi_1^c (\mathcal{N} \otimes I) U_{t,s}^c \psi \rangle \\
      & \le \left\langle (\mathcal{N} \otimes I) U^c_{t,s} \psi, \left( \frac{\mathcal{N} \otimes I}{M_1} \right)^{2j} (\mathcal{N} \otimes I) U_{t,s}^c \psi \right\rangle \\
      & = M_1^{-2j} \langle U^c_{t,s} \psi, (\mathcal{N} \otimes I)^{2j+2} U^c_{t,s} \psi \rangle \\
      & \le M_1^{-2j} \langle U^c_{t,s} \psi, (\mathcal{N} \otimes I + I \otimes \mathcal{N})^{2j+2} U^c_{t,s} \psi \rangle,
    \end{split}
  \]
  where we used the operator inequality
  \[
    \chi(\mathcal{N} > M_1) \le (\mathcal{N}/M_1)^{2j}.
  \]
  Hence
  \[
    \begin{split}
      & |w_1| \\
      & \le C_j \left( \frac{N_1+N_2}{M_1} \right)^j \left( \frac{N_1+N_2}{N_1} \right)^{1/2} \\
      & \quad \times \| ((\mathcal{N}+1) \otimes I + I \otimes (\mathcal{N}+1))^{j+1} \psi \| \\
      & \quad \times \int_s^t dr \, \langle U^c_{t,s} \psi, (\mathcal{N} \otimes I + I \otimes \mathcal{N})^{2j+2} U^c_{t,s} \psi \rangle^{1/2} \\
      & \le C_j \left( \frac{N_1+N_2}{M_1} \right)^j \| ((\mathcal{N}+1) \otimes I + I \otimes (\mathcal{N}+1))^{j+1} \psi \|^2 \\
      & \quad \qquad \times \int_s^t dr \, \exp(K_j(1 + \sqrt{M_1/N_1} + \sqrt{M_2/N_2}) (r-s)) \\
      & \le C_j \left( \frac{N_1+N_2}{M_1} \right)^j \| ((\mathcal{N}+1) \otimes I + I \otimes (\mathcal{N}+1))^{j+1} \psi \|^2 \\
      & \quad \qquad \times \frac{\exp(K_j (1 + \sqrt{M_1/N_1} + \sqrt{M_2/N_2}) |t-s|)}{1 + \sqrt{M_1/N_1} + \sqrt{M_2/N_2}},
    \end{split}
  \]
  where we used Lemma \ref{l:step1} and Lemma \ref{l:Z}.

  Now, we estimate $|w_5|$.
  Calculating, we obtain
  \[
    \begin{split}
      & |w_5| \\
      & \le \frac{\sqrt{N_1}}{N_1+N_2} \int_s^t dr \int dy \, \| c_y U_{t,s}^* (\mathcal{N} \otimes I + I \otimes \mathcal{N})^j U_{t,s} \psi \| \\
      & \qquad \qquad \qquad \qquad \times \| b(V_{12}(y-\cdot) u_t(\cdot)) c_y \chi_2^c U^c_{t,s} \psi \| \\
      & \le \frac{\sqrt{N_1}}{N_1+N_2} \sup_y \| V_{12}(y-\cdot) u_t(\cdot) \| \int_s^t dr \int dy \, \| c_y U_{t,s}^* (\mathcal{N} \otimes I + I \otimes \mathcal{N})^j U_{t,s} \psi \| \\
      & \qquad \qquad \qquad \qquad \times \| c_y (\mathcal{N}^{1/2} \otimes I) \chi_1^c(M_1+1) U^c_{t,s} \psi \| \\
      & \le \frac{C\sqrt{N_1}}{N_1+N_2} \int_s^t dr \, \| (I \otimes \mathcal{N}^{1/2}) U_{t,s}^* (\mathcal{N} \otimes I + I \otimes \mathcal{N})^j U_{t,s} \psi \| \\
      & \qquad \qquad \qquad \qquad \times \| ((\mathcal{N}+1)^{1/2} \otimes (\mathcal{N}+1)^{1/2}) \chi_1^c U^c_{t,s} \psi \|.
    \end{split}
  \]
  Similarly as above, we prove that
  \begin{multline*}
    \| (I \otimes \mathcal{N}^{1/2}) U_{t,s}^* (\mathcal{N} \otimes I + I \otimes \mathcal{N})^j U_{t,s} \psi \|^2 \\
    \le C_j (N_1 + N_2)^{2j+1} \langle U_{t,s} \psi, ((\mathcal{N}+1) \otimes I + I \otimes (\mathcal{N}+1))^{2j+2} U_{t,s} \psi \rangle
  \end{multline*}
  and
  \[
    \begin{split}
      & \| (\mathcal{N}+1)^{1/2} \otimes (\mathcal{N}+1)^{1/2} \chi_1^c U^c_{t,s} \psi \|^2 \\
      & = \langle (\mathcal{N}+1)^{1/2} \otimes (\mathcal{N}+1)^{1/2} U^c_{t,s} \psi, \chi_1^c (\mathcal{N}+1)^{1/2} \otimes (\mathcal{N}+1)^{1/2} U_{t,s}^c \psi \rangle \\
      & \le \left\langle (\mathcal{N}+1)^{1/2} \otimes (\mathcal{N}+1)^{1/2} U^c_{t,s} \psi, \left( \frac{\mathcal{N} \otimes I}{M_1} \right)^{2j} (\mathcal{N}+1)^{1/2} \otimes (\mathcal{N}+1)^{1/2} U_{t,s}^c \psi \right\rangle \\
      & = M_1^{-2j} \langle U^c_{t,s} \psi, (\mathcal{N} \otimes I)^{2j+2} U^c_{t,s} \psi \rangle \\
      & \le M_1^{-2j} \langle U^c_{t,s} \psi, (\mathcal{N} \otimes I + I \otimes \mathcal{N})^{2j+2} U^c_{t,s} \psi \rangle.
    \end{split}
  \]
  Thus, since $\sqrt{N_1/(N_1+N_2)} < C$ by Lemma \ref{l:Z}, similarly as above, we have
  \[
    \begin{split}
      & |w_5| \\
      & \le C_j \left( \frac{N_1+N_2}{M_1} \right)^j \| ((\mathcal{N}+1) \otimes I + I \otimes (1+\mathcal{N}))^{j+1} \psi \|^2 \\
      & \quad \qquad \times \frac{\exp(K_j (1 + \sqrt{M_1/N_1} + \sqrt{M_2/N_2}) |t-s|)}{1 + \sqrt{M_1/N_1} + \sqrt{M_2/N_2}}.
    \end{split}
  \]
  Similarly as we estimated $|w_1|$ and $|w_5|$, we estimate $|w_j|$ for $j = 2,3,4,6,7,8$ and obtain similar bounds with $M_1$ replaced by $M_2$ accordingly.
  This proves \eqref{step3a}.

  To prove \eqref{step3b}, we proceed similarly as in the proof of \eqref{step3a}.
  Calculating, we obtain
  \begin{equation}
    \label{expm}
    \langle U^c(t,s) \psi, (\mathcal{N} \otimes I + I \otimes \mathcal{N})^j (U(t,s) - U^c(t,s)) \psi \rangle = \sum_{j=1}^8 m_j,
  \end{equation}
  where $m_j$ for $j=1,\dots,8$ are defined exactly as $w_j$ for $j=1,\dots,8$, but with $U(t,s)$ on the left side replaced by $U^c(t,s)$.
  We want to estimate $|m_j|$ for each $j$.
  The calculations are very similar to the calculations for estimating $|w_j|$, except that instead of \eqref{est1} we have
  \[
    \begin{split}
      & \| (\mathcal{N}^{1/2} \otimes I) U_{t,s}^* (\mathcal{N} \otimes I + I \otimes \mathcal{N})^j U^c_{t,s} \psi \|^2 \\
      & = \langle (\mathcal{N} \otimes I + I \otimes \mathcal{N})^j U^c_{t,s} \psi, U_{t,s} (\mathcal{N} \otimes I) U_{t,s}^* (\mathcal{N} \otimes I + I \otimes \mathcal{N})^j U^c_{t,s} \psi \rangle \\
      & \le 24 \langle (\mathcal{N} \otimes I + I \otimes \mathcal{N})^j U^c_{t,s} \psi, \\
      & \qquad \qquad \qquad (\mathcal{N} \otimes I + I \otimes \mathcal{N} + N_1 + N_2 + 1) (\mathcal{N} \otimes I + I \otimes \mathcal{N})^j U^c_{t,s} \psi \rangle \\
      & \le 24 \langle U^c_{t,s} \psi, (\mathcal{N} \otimes I + I \otimes \mathcal{N})^{2j+1} U^c_{t,s} \psi \rangle \\
      & \quad + 24 (N_1 + N_2 + 1) \langle U^c_{t,s} \psi, (\mathcal{N} \otimes I + I \otimes \mathcal{N})^{2j} U^c_{t,s} \psi \rangle \\
      & \le C_j (N_1 + N_2) \| ( (\mathcal{N}+1) \otimes I + I \otimes (\mathcal{N}+1) )^{j+1} \|^2 \\
      & \quad \qquad \times \exp(K_j (1 + \sqrt{M_1/N_1} + \sqrt{M_2/N_2}) |t-s|).
    \end{split}
  \]
  Notice the absence of a factor $(N_1 + N_2)^{2j}$ in the numerator.
  By taking this difference into account, we conclude that each $|m_j|$ is bounded by
  \[
    \frac{C_j}{M_k^j} \| ((\mathcal{N}+1) \otimes I + I \otimes (\mathcal{N}+1))^{j+1} \psi \|^2 \frac{\exp(K_j (1 + \sqrt{M_1/N_1} + \sqrt{M_2/N_2}) |t-s|)}{1 + \sqrt{M_1/N_1} + \sqrt{M_2/N_2}},
  \]
  where $M_k$ is either $M_1$ or $M_2$, accondingly.
  This completes the proof of \eqref{step3b}.
  We have proved Lemma \ref{l:step3}.
\end{proof}

We are ready to prove Proposition \ref{p:part2}.

\begin{proof}[Proof of Proposition \ref{p:part2}.]
  Write
  \[
    \begin{split}
      \langle U_{t,s} \psi, (\mathcal{N} \otimes I + I \otimes \mathcal{N})^j U_{t,s} \psi \rangle & = \langle U_{t,s} \psi, (\mathcal{N} \otimes I + I \otimes \mathcal{N})^j (U_{t,s} - U^c_{t,s}) \psi \rangle \\
      & \quad + \langle (U_{t,s} - U^c_{t,s}) \psi, (\mathcal{N} \otimes I + I \otimes \mathcal{N})^j U^c_{t,s} \psi \rangle \\
      & \quad + \langle U^c_{t,s} \psi, (\mathcal{N} \otimes I + I \otimes \mathcal{N})^j U^c_{t,s} \psi \rangle.
    \end{split}
  \]
  Using Lemmata \ref{l:step1} and \ref{l:step3} with $M_1 = N_1$ and $M_2 = N_2$, and Lemma \ref{l:Z}, we obtain
  \[
    \langle U_{t,s} \psi, (\mathcal{N} \otimes I + I \otimes \mathcal{N})^j U_{t,s} \psi \rangle \le C_j \| [(\mathcal{N} + 1) \otimes I + I \otimes (\mathcal{N}+1)]^{j+1} \psi \|^2 e^{\gamma_j |t-s|}
  \]
  for some constants $C_j$ and $\gamma_j$.
  This proves Proposition \ref{p:part2}.
\end{proof}

\end{document}